\newif\ifusenix
\newif\ifacm
\newif\ifmcom
\newif\ifieee

\usenixfalse
\acmtrue
\mcomfalse
\ieeefalse
\ifieee
\documentclass[10pt, conference, letterpaper]{IEEEtran}
\IEEEoverridecommandlockouts
\fi
\ifusenix
	\documentclass[letterpaper,twocolumn,10pt]{article}
	\usepackage{usenix}
	\usepackage{times}
\fi
\ifacm
  \documentclass[10pt,sigconf]{acmart}
  
\fi

\ifmcom
  \documentclass{sig-alternate-10pt}
\fi

\usepackage{booktabs} %added by Ish for nice looking tables
\usepackage{pifont}% http://ctan.org/pkg/pifont for tick and cross symbols
\usepackage{xcolor}
\usepackage{amsfonts}
\usepackage{balance}
\PassOptionsToPackage{hyphens}{url}
\usepackage{hyperref}
\usepackage{color}
\usepackage{graphics}
\usepackage{graphicx}
\usepackage{url}
\usepackage{listings}
\usepackage{multicol}
\usepackage{multirow}
\usepackage[scaled]{helvet}
\usepackage{rotating}
\usepackage{xspace}
\usepackage{algorithm}
\usepackage{comment}
\usepackage{enumitem}
\usepackage{amsmath}
\usepackage{mathrsfs}
\usepackage{cancel}
\usepackage{cleveref}
\usepackage{subfigure}
\usepackage{commath}

\usepackage{amssymb} % used for \blacksquare
\usepackage[font=small,labelfont=bf]{caption}
\usepackage{amsthm}
\theoremstyle{plain}

\newcommand{\newtodo}[1]{\ClassWarning{NOT READY TO SUBMIT}{There is something left todo} \textcolor{blue}{}}

\newcommand{\name}{FlexLink\xspace}
\newcommand{\dpa}{DPA\xspace}

\ccsdesc[500]{Hardware~Wireless devices}
\ccsdesc[500]{Networks~Physical links}
\ccsdesc[500]{Networks~Wireless access points, base stations and infrastructure}

\keywords{Multi-beamforming, Delay-phased arrays, Control–data decoupling, Millimeter-wave and mid-band networks, Hardware prototyping}

\definecolor{deepblue}{rgb}{0,0,0.5}
\definecolor{deepred}{rgb}{0.6,0,0}
\definecolor{deepgreen}{rgb}{0,0.5,0}
\definecolor{backcolour}{rgb}{0.95,0.95,0.92}

\def\beq{\begin{equation}}
\def\eeq{\end{equation}}
\def\beqa{\begin{eqnarray}}
\def\eeqa{\end{eqnarray}}
\def\beqan{\begin{eqnarray*}}
\def\eeqan{\end{eqnarray*}}

\newtheorem{theorem}{$\blacksquare$ Theorem}

\renewcommand{\qed}{\rule{0.7em}{0.7em}}

\def\qed{\mbox{} \hfill $\Box$}
\def\tm1{t\! - \! 1}
\def\tp1{t\! + \! 1}

\def\ant{\text{ant}}

\begin{document}

\acmYear{2025}\copyrightyear{2025}
\setcopyright{cc}
\setcctype[4.0]{by}
\acmConference[MobiHoc '25]{International Symposium on Theory, Algorithmic Foundations, and Protocol Design for Mobile Networks and Mobile Computing}{October 27--30, 2025}{Houston, TX, USA}
\acmBooktitle{International Symposium on Theory, Algorithmic Foundations, and Protocol Design for Mobile Networks and Mobile Computing (MobiHoc '25), October 27--30, 2025, Houston, TX, USA}
\acmDOI{10.1145/3704413.3764470}
\acmISBN{979-8-4007-1353-8/25/10}
%%%%%%%%%%% Agrim commands-2 start%%%%%%%%%%%%
% \interfootnotelinepenalty=10000
% \setlength{\belowdisplayskip}{2pt} \setlength{\belowdisplayshortskip}{2pt}
% \setlength{\abovedisplayskip}{2pt} \setlength{\abovedisplayshortskip}{2pt}
%%%%%%%%%%%%% end %%%%%%%%%%%%%%%%%%%%
% \title{Supporting multi-user with multi-beam multi-frequency delay-based phased array design}
\title{ FlexLink: Decoupling Control and Data Beams for Next-Generation Wideband Networks
% Decoupling Control \& Data Beams in Wideband Multi-antenna Networks
% Concurrent Communication and Initial Access for mmWave network with a single RF Chain
% Towards Flexible Antenna Array for mmWave (Joint) Communication and Sensing Beyond 5G
}

% \author[]{Paper \#607 12 pages + References + Appendix}

% Ish Kumar Jain
\author{Ish Kumar Jain}
\orcid{0000-0002-6321-379} % optional, replace with real ORCID
\affiliation{%
  \institution{Rensselaer Polytechnic Institute}
  % \department{Department of Electrical, Computer, and Systems Engineering}
  \city{Troy}
  \state{NY}
  \country{USA}}
% \affiliation{%
%   \institution{University of California San Diego}
%   \department{Department of Electrical and Computer Engineering}
%   \city{La Jolla}
%   \state{CA}
%   \country{USA}}
\email{jaini@rpi.edu}

% Rohith Reddy Vennam
\author{Rohith Reddy Vennam}
\affiliation{%
  \institution{University of California San Diego}
  % \department{Department of Electrical and Computer Engineering}
  \city{La Jolla}
  \state{CA}
  \country{USA}}
\email{rvennam@ucsd.edu}

% Dinesh Bharadia
\author{Dinesh Bharadia}
\affiliation{%
  \institution{University of California San Diego}
  % \department{Department of Electrical and Computer Engineering}
  \city{La Jolla}
  \state{CA}
  \country{USA}}
\email{dineshb@ucsd.edu}

% \renewcommand{\shortauthors}{First Author et al.}

% \author{Ish Kumar Jain, Rohith Reddy Vennam, Dinesh Bharadia}
% \affiliation{
%     \institution{University of California San Diego}
%     \city{La Jolla}
%     \state{CA}
%     \country{USA}
%     }
\renewcommand{\shortauthors}{IK Jain, RR Vennam, D Bharadia}
% \email{{ikjain, rsubbaraman, dineshb}@eng.ucsd.edu}

% \renewcommand{\shorttitle}{\name: Decoupling Control and Data Communication}
% The default list of authors is too long for headers.
%\renewcommand{\shortauthors}{B. Trovato et al.}

%
% The code below should be generated by the tool at
% http://dl.acm.org/ccs.cfm
% Please copy and paste the code instead of the example below.

\ifacm
    % !TEX root = main.tex

\begin{abstract}

The next generation of 6G networks aims to utilize ultra-wideband spectrum and massive antenna arrays to serve multiple users with both control and data channels at low latency and high efficiency. However, phased arrays at mmWave and mid-bands are fundamentally constrained to a single beam or suffer sharp beamforming loss when split across directions, limiting simultaneous control-data support. In FlexLink, we introduce and prototype a novel delay-phased array architecture that overcomes this limitation by redistributing energy jointly across frequency and space, enabling multiple narrow beams without sacrificing per-beam gain or requiring additional power. 
We design and prototype FlexLink on a custom 4-7 GHz hardware testbed, demonstrating for the first time that control and data beams can be decoupled in practice, achieving nearly double spectral efficiency compared to conventional phased arrays.

\end{abstract}

\fi

\maketitle

\ifusenix
    
\fi

\ifmcom
    
\fi
\ifieee
    
\fi

% !TEX root = main.tex
\section{Introduction}\label{sec:intro}

% Motivation
% mmwave great. High bandwidth OFDM - heavy appl, multi antenna - high coverage. 

% % what is the problem
% Coupled control and data. 

% % why hard

% % key insights

% % Solution

% % Impact

Wideband and Multi-antenna systems such as current Millimeter wave (mmWave) bands in Frequency Range 2 (24.25 GHz to 52.6 GHz) and upcoming mid-bands in Frequency Range 3 (7.125 GHz to 24.25 GHz) are vital for next-generation 6G and beyond networks to support high throughput and low latency applications such as autonomous vehicles, industrial IoT, XR streaming. To facilitate communication at these bands, 3GPP adopts OFDMA to optimize time and frequency resources (in the form of resource blocks or RB) with various control and data signals.
% These bands have been specified in 3GPP release xyz and have already seen deployments in various cities such as xyz and are expected to grow by xyz base stations in the next xyz years. To facilitate communication at these bands, 3GPP specifies certain control and data signals with certain structure that are vital for coordination and functioning. For instance,  the initial access procedure in FR2 bands requires multiple SSB control signals for initial access. These control SSB signals adverties the base station ID to allow any new user device to connect to the base station. Ideally, we want the control and data symbols to occupy orthogonal time-frequency resources with no overlap and fill all available time-frequency resource blocks, i.e., utilize some RBs for control and other RBs for data. For instance, with a bandwidth of 400 MHz and sub-carrier spacing of 120 kHz, we could have 270 RBs in one time slot. We could achieve high spectrum utilization by using all 270 RBs for control and data signals.
\begin{figure}[t!]
    \centering
    \includegraphics[width=0.4\textwidth]{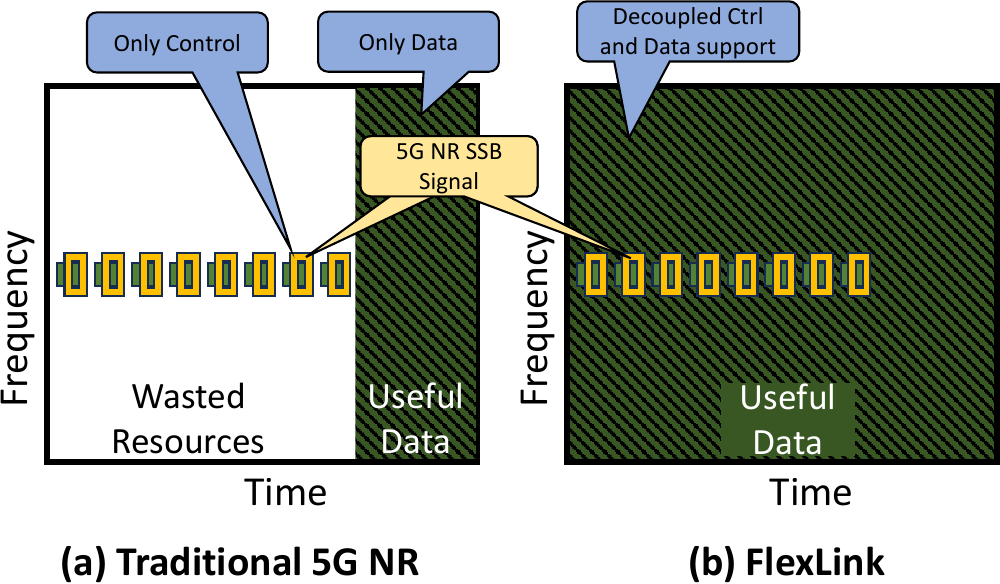}
    \caption{\name designs a novel radio architecture that can decouple control and data signaling through flexible frequency resource distribution of control and data to different directions with narrow pencil beams, much like a flexible configurable prism.
    }
    \vspace{-0.02\textheight}
    \label{fig:intro}
\end{figure}
However, current mmWave systems suffer from low spectrum utilization and cannot fill all orthogonal RBs with control and data signals due to a constraint posed by directional links. A phased array creates a single beam to radiate all RBs in a single direction, meaning all the control and data are radiated in one single direction. For instance, the SSB (Secondary Synchronization Block) control signal beam has to cover all 360 space since a new user may appear at any angle around the base station. However, the data beam focuses on an active user direction, which may not be located along the SSB control beam direction (Figure~\ref{fig:intro}). 
% Therefore active users may not be served along with the control beam.
% located uniformly in 360 space, therefore, they may not be . 
% For example, when the control beam is at 30 degrees, and there are no active users at 30 degrees, then the base station can only send control and no data signal in that time slot. 
Additionally, since these control signals occupy a small number of resource blocks, the remaining RBs could go unused/wasted resources. SSB control signal requires only 7\% of the entire 400 MHz band; the remaining 93\% RBs are unused and wasted as they cannot serve any user with data communication~\cite{5gnr}. This frequency-direction constraint leads to the coupling of control and data signals. The specs for mid-bands are not yet designed, but they are expected to follow a similar pattern to 5G NR~\cite{bazzi2025upper}, facing similar issues of coupled data and control beams.

% There are numerous studies that try to decouple the control and data signals by swaying away from traditional single beam architectures. 
To decouple control and data signals, multi-beams architectures are proposed, that can support multiple directions simultaneously, i.e., one direction for control signals and the remaining directions for data communication in a single time slot.
% , but they all suffer from performance-complexity tradeoff. High performance in terms of throughput or spectral efficiency comes at the cost of high complexity in the form of cost and power. For instance, a fully-digital architecture can support multiple parallel beams for control and data, but at the cost of higher number of high-sampling ADC/DACs and RF chains that increases with the bandwidth and number of antennas. Other efforts are to create new analog architectures with a single RF chains. 
Traditional phased arrays can be used to create multi-beams, but since they utilize an antenna-splitting mechanism~\cite{jain2021two, aykin2019smartlink}, they compromise performance in terms of lower throughput and coverage. Splitting a beam into two reduces beamforming gain by half for each split beam to preserve the total radiated power. TTD-based architecture is proposed to create infinitely many beams at the same time by spreading each frequency subcarrier to different directions like a prism~\cite{li2022rainbow,yan2019wideband,boljanovic2021fast, jain2023mmflexible, ratnam2022joint}. But this extreme of infinitely many beams radiates in all directions, even those that may not have an active user, thus wasting spectrum resources. Moreover, each direction receives signals from a tiny fraction of bandwidth, which is not enough to schedule a wideband signal transmission for data signals. Therefore, there is a gap in the literature to provide support for decoupling control and data signals without wasting precious spectrum resources.

In this paper, we propose \name, a novel multi-beamforming system to decouple the control and data beams in a flexible, 5G-compliant manner that achieves high spectrum utilization.
\name uses an antenna array architecture called delay-phased array (DPA)~\cite{jain2023mmflexible,ratnam2022joint}. The \dpa architecture consists of both delays and phases, unlike traditional phased arrays, which only have phase elements, or TTD arrays, which only have delay elements. Recent work on \dpa shows the ability to create flexible frequency-dependent multi-beamforming, with the ability to stream a subset of frequencies in one direction and another subset in another direction until all frequency resources are utilized without wastage. While this architecture has been previously used for multi-user communication with frequency multiplexing, we propose a new application for decoupling control and data signals, while meeting more stringent requirements for this application. We particularly make three important contributions:

\textbf{1) Fast configuration of multi-beams:} First, we develop a new optimization framework to estimate delays and phases per antenna to create a desired multi-beam response, for instance, one beam for control and another for data at the same time, using orthogonal frequency RBs. Unlike previous iterative methods, which suffer from high computation complexity ($O(MN\log(MN))$ for M directions and N frequency subcarriers), our formulation delivers the same result in one shot. We achieve this fast estimation by approximating an NP-hard optimization problem as an L2 norm minimization problem with a non-linear constraint and solving it to a closed-form expression. The closed-form formula allows us to obtain per-antenna delay and phase values as a function of the desired number of beams, beam angles, and the fraction of bandwidth per beam. The computation is O($K$) for phases and O($K^2$) for delays, for $K$ beams, independent of the number of antennas $N$ or number of subcarriers $M$. In practice, we require only a few beams for control and data traffic (e.g., one beam for control beam scan and 1-2 beams for data), so the complexity is close to O(1), for a few fixed beams scenarios. To the best of our knowledge, this is the first work to achieve a closed-form formula for delay and phase values to program \dpa. 

% In addition to reducing complexity, we provide important insights derived from the closed form expression. The main 

\textbf{2) Supporting a wide range of bandwidth parts: } Second, the multi-beams should support both control and data signals with a wide range of bandwidth parts. The control signal may require as small as 7\% of bandwidth, while the data may need a large 93\%. So, the hardware should support such diverse bandwidth parts. We show that the traditional solution is limited to a minimum of 20\% bandwidth part support. Lower than 20\% bandwidth part, the beamforming gain degrades exponentially and cannot be supported. In contrast, we optimize the framework to support this wide range of bandwidth-parts while supporting less than 10\% bandwidth-part for the control beam, while the remaining 90\% goes to the data beam.

\textbf{3) Designing hardware prototype for \name: } Finally, we are the first to build a hardware prototype for \dpa and demonstrate the performance of frequency-dependent multi-beams in real-world settings with over-the-air experiments. Our prototype operates in the 4-7 GHz range, which can be extended to mid-band frequencies. We discuss various design choices in building this hardware prototype, including the impact of range and resolution of delay and phase values per antenna, the methodology to achieve a wideband delay element using a switched array architecture, and system integration. The code and artifacts for \name are available online\footnote{Artifact link ~\url{wcsng.ucsd.edu/dpa}}.

% We provide details on our PCB for control and power distribution and FPGA implementation for fast beamforming at \url{wcsng.ucsd.edu/flexlink}
% , which will be open-sourced to the community. 

% : It contains control signal distribution using Artix CMOD A7 FPGA to configure the delay and phase element as fast as in 2 us. It also has power distribution modules to supply appropriate voltage levels to all components in the system.

\textbf{\name Evaluation Overview:} We verify the performance of \name through both system and circuit-level simulations and over-the-air hardware measurements. We compare antenna gain and throughput performance with two baselines: One \dpa-based heuristic algorithm, which suffers from high computational complexity, and another phased array multi-beam baseline, which suffers from low SNR and throughput. We show that \name can create any arbitrary multi-beam response with a configurable bandwidth part with O(1) complexity, improving the spectral efficiency of both control and data beams simultaneously.

\section{Related Work} \label{sec:related}
% aykin2019multi
\textbf{Multi-beams with phased arrays:}  
Conventional phased arrays can generate multiple concurrent beams, but only by splitting power, which reduces effective gain and throughput~\cite{marjan2018split,jain2021two,aykin2019smartlink,hassanieh2018fast}. Quasi-omni designs~\cite{ismayilov2018adaptive} perform even worse, spreading energy broadly with low SNR. These methods also radiate the entire bandwidth in all directions. In contrast, \name leverages frequency-dependent multi-beams that concentrate gain in targeted directions, simultaneously supporting control and data without sacrificing performance.  

% gao2023beamsquint,li2024can,
\textbf{Fixed true-time delay (TTD) and related arrays:}  
Architectures based on TTD~\cite{yan2019wideband,boljanovic2021fast,wadaskar20213d}, frequency-scanning~\cite{li2022bringing,sekretarov2023frequency}, and leaky-wave antennas~\cite{saeidi2021thz,ghasempour2020single} create frequency-dependent patterns by dispersing signals prism-like across directions. However, each beam carries only a small bandwidth fraction ($<6\%$~\cite{boljanovic2021fast,wadaskar20213d}), wasting spectrum when no active users are present. Moreover, such designs often require modifications at both base stations and user devices, making them incompatible with 5G NR. By combining delay and phase, \name avoids these pitfalls and remains fully 5G NR compliant while flexibly assigning bandwidth per beam.  

\textbf{Delay-phased arrays (DPA):}  
Recent work on DPA~\cite{jain2023mmflexible} and joint phase-time arrays~\cite{ratnam2022joint,mo2025beamforming} shows the potential of frequency-selective multi-beams. Prior studies explored beam squint mitigation~\cite{zhao2024fast,nguyen2024joint}, 2D arrays~\cite{yildiz20243d,wadaskar2025fast}, near-field effects~\cite{cai2025hybrid}, and codebook design~\cite{wadaskar2023structured}. Yet, none derived closed-form delay/phase expressions or addressed bounded delay requirements for hardware. \name is the first to provide closed-form analysis, apply DPA to control–data decoupling, and validate performance with an open-source over-the-air prototype. A recent Samsung demonstration~\cite{mo2025beamforming} further confirms practicality, but \name uniquely demonstrates end-to-end feasibility for simultaneous control and data.  

\textbf{Hardware TTD implementations:}  
TTD hardware spans CMOS/SiGe RFICs with ns-scale ranges~\cite{mondal20172,aghazadeh20223,li2020800},
% ,jung2020compact,abbasi2025four,xu202410ns,kim2018cmos
FR2/D-band silicon with ps-scale delays~\cite{lee2019continuous},
% karakuzulu2020broadband
and photonic beamformers offering ultrabroad bandwidths~\cite{zhu2020silicon}.
% martinez2024ultrabroadband
These approaches trade delay range against operating frequency. Unlike prior device-level efforts, \name provides a system-level insight: the delay needed for multi-beam combining is bounded and independent of array size. We exploit this analytically and experimentally, enabling a practical 4–7~GHz prototype with only a 1~ns delay range using the Extreme-Waves unit~\cite{ew}.

% !TEX root = main.tex

\section{Design for \name} \label{sec:design}
\begin{figure}[t!]
    \centering
    \includegraphics[width=0.45\textwidth]{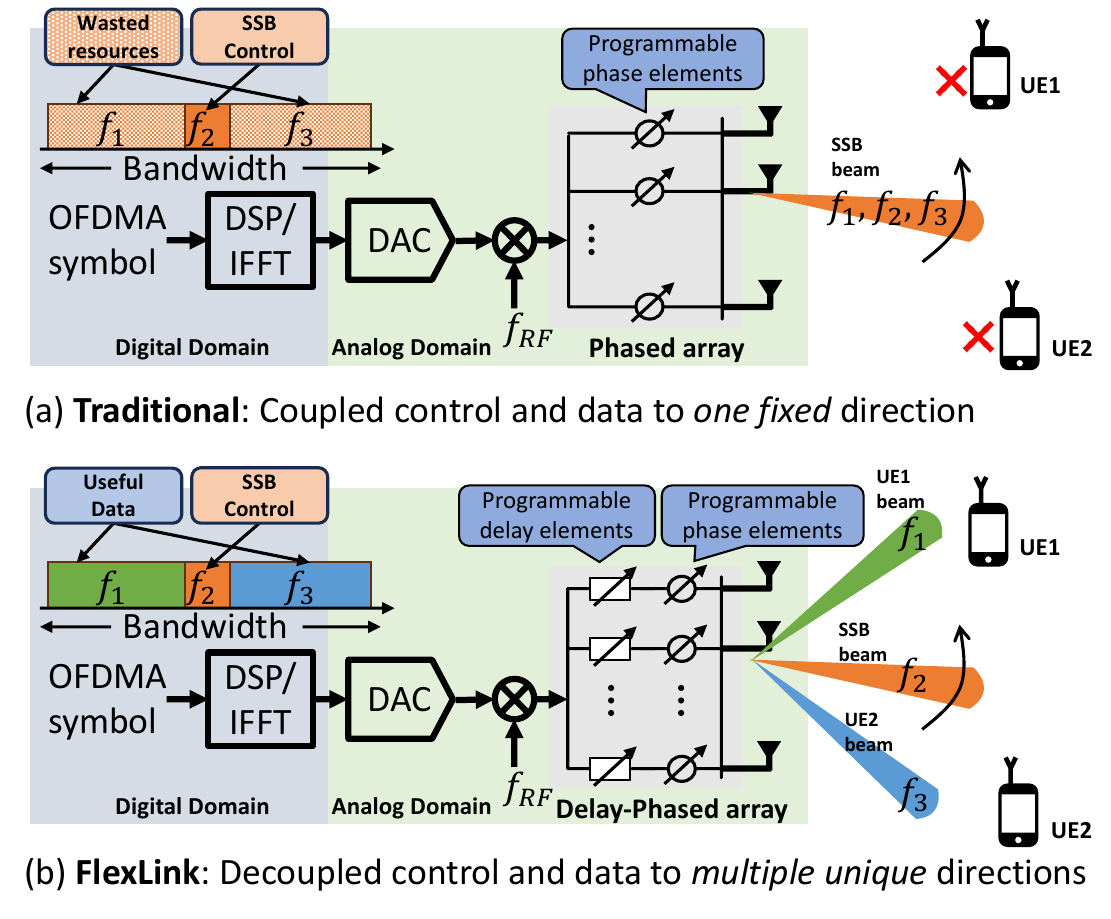}
    \vspace{-0.015\textheight}
    \caption{Delay-phased array (DPA) is an analog front-end architecture with programmable delay and phase elements. }
    \vspace{-0.02\textheight}
    \label{fig:dpa_architectures}
\end{figure}

\textit{\textbf{Problem: Coupled control and data signaling }}
In conventional phased arrays, all subcarriers are radiated in the same beam direction due to a single RF-chain analog front-end. As shown in Figure~\ref{fig:dpa_architectures}(a), three disjoint subcarrier groups $f_1$, $f_2$, and $f_3$ are steered together toward an SSB beam at $\theta_2$, even when active users are located at $\theta_1$ and $\theta_3$. This coupling wastes time–frequency resources and prevents simultaneous service of users in different directions. A naive fix is to widen or split beams, but this reduces beamforming gain~\cite{jain2021two}, degrading throughput and coverage.

\textit{\textbf{FlexLink with Delay-Phased Array:}}
\name overcomes this limitation using a delay-phased array (\dpa) architecture that integrates programmable delays and phase shifts. The delay element $\tau_n$ at antenna $n$ induces a frequency-dependent phase rotation $2\pi f \tau_n$, enabling different frequency bands to radiate in different directions. This property allows \name to decouple the SSB control beams from the data beams, for example, directing the SSB toward $\theta_2$ while simultaneously steering the data beams to $\theta_1$ and $\theta_3$ (Figure~\ref{fig:dpa_architectures}(b)). Unlike split-beam methods, this approach preserves beamforming gain and unlocks higher throughput and lower latency.

The remainder of this section is organized as follows. Section~\ref{sec:derivation} derives closed-form expressions for computing delay and phase settings to realize multi-beam patterns. Section~\ref{sec:arbitrary_bp} introduces a bandwidth-splitting strategy that supports arbitrary resource allocation per beam. Section~\ref{sec:hardware} discusses insights from our eight-antenna \dpa prototype, validating the practicality of the approach.

\subsection{Closed-form computation of delay and phase in \dpa} \label{sec:derivation}
Our goal is to program the \dpa delay and phase values to generate a desired frequency-dependent multi-beam response. The number of beams in our multi-beams can be arbitrary but finite. The direction of each beam can be anywhere in the field-of-view of \dpa, ideally at different angles. Importantly, each beam should carry a fraction of the bandwidth (as a set of contiguous subcarriers or bandwidth parts), such that the total bandwidth parts sum to the system bandwidth. These flexible multi-beams are achieved via \dpa hardware by programming the delay and phase elements appropriately. Mathematically, we model \dpa as a uniform linear antenna array with the phase value $\Phi_n$ and delay value $\tau_n$ at antenna index $n$. The antenna weight vector $w_\text{dpa}(n,f)$ is then given by:
\begin{equation}
    w_\text{dpa}(n,f) = e^{j\Phi_n+j2\pi f \tau_n}
\end{equation}
which is a function of antenna index $n$ and frequency $f$. The beamforming gain $G(f,\theta)$ of \dpa as a function of the weight vector is given by:
\begin{equation}
    G(f,\theta) = \sum_{n=0}^{N-1} w_\text{dpa}(n,f) e^{-jn\pi\sin(\theta)} \label{Gain equation}
\end{equation}
which is a function of the beamforming angle $\theta$ and frequency $f$, much like a 2D energy map shown in Figure~\ref{fig:dpa_architectures}(d). Note, we assume the antenna spacing is approximately $\lambda/2$ and ignore the effect of beam squint for simplicity~\cite{ratnam2022joint}.

Our objective is to maximize the beamforming gain along the desired multibeam directions as:
\begin{equation}
\begin{split}
    &\max_{\tau_n,\Phi_n} \;\left\|G(f,\theta)\right\|^2, \; s.t. (\theta, f) \in \{(\theta_1,f_1), \ldots, (\theta_K, f_K)\}
\end{split}
\end{equation}
where the desired $K$ multi-beam directions are represented by set $\{(\theta_1,f_1), \ldots, (\theta_K, f_K)\}$. Ideally, we do not want to radiate any energy in the undesired bands/angles, but it is inevitable due to the undesired side-lobes that appear due to a finite number of antennas.

Now, we ask what set of delays and phases per antenna would give us the beamforming gain pattern with the desired bandwidth part and beam direction. We first consider a simple case of two beams with equal bandwidth parts of $B/2$ each, where $B$ is the total system bandwidth. We assume the two beams are directed along $(-\theta_0, \theta_0)$ respectively as shown in Figure \ref{fig:dpa_proof}(a). We will later discuss a general case with an arbitrary bandwidth part and beam direction.

We formulated it as an optimization problem and solved it to a closed-form expression for the set of delays $\tau_n$ and phases $\Phi_n$ for each antenna $n$ ($n=0,1,\ldots, N-1$) that would generate the given beamforming response as follows:

\begin{theorem}\label{th:2beamcase}
\textbf{(2-beam symmetric case)}
The closed-form expression for the set of delays $\tau_n$ and phases $\Phi_n$ for each antenna $n$ ($n=0,1,\ldots, N-1$) that would generate a given two-beam response with equal bandwidth part $B/2$ and angles $\pm \theta_0$ respectively is as follows:
\begin{equation} \label{eq:tau_closed_proof}
    \tau_n = \left(\frac{3}{2B}n\sin(\theta_0) +\frac{3}{4B}\right)\;\;\;\text{mod }\frac{3}{2B} 
\end{equation}

\begin{equation}\label{eq:phase_closed_proof}
    \Phi_n = \text{round}(n\sin(\theta_0))\pi \;\;\;\text{mod }2\pi
\end{equation}

\end{theorem}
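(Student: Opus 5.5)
I will reduce the design problem to a per-antenna phase-matching problem and then solve it in closed form. The derivation is heuristic in the spirit of the paper.

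\textbf{Per-antenna matching objective.} Write the two desired beams as $(\theta_1,\mathcal{F}_1)=(-\theta_0,[-B/2,0))$ and $(\theta_2,\mathcal{F}_2)=(\theta_0,[0,B/2))$, with frequencies measured from the carrier. $\left\|G(f,\theta)\right\|^2$ is maximized, with value $N^2$, exactly when every term of \eqref{Gain equation} has a common phase. Dropping the irrelevant common phase, the ideal weight at antenna $n$ is
\[
w^\star(n,f)=e^{-jn\pi\sin\theta_0}\ \text{for } f\in\mathcal{F}_1, \qquad w^\star(n,f)=e^{+jn\pi\sin\theta_0}\ \text{for } f\in\mathcal{F}_2 .
\]
Since $w_\text{dpa}(n,f)=e^{j(\Phi_n+2\pi f\tau_n)}$ is linear in $f$ on the phase scale, I relax the NP-hard problem to an $L_2$ fit, antenna by antenna. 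For each $n$ I choose $(\Phi_n,\tau_n)$ so that the line $\Phi_n+2\pi f\tau_n$ matches the target phase $-n\pi\sin\theta_0$ at the center of the lower band, $f=-B/4$, and the target phase $+n\pi\sin\theta_0$ at the center of the upper band, $f=B/4$. Both conditions hold modulo $2\pi$, with integer unwrapping choices $k_1,k_2$. This is the nonlinear (modular) constraint.

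\textbf{Solving the constraint.} Subtracting the two conditions gives
\[
\pi B\tau_n = 2\pi n\sin\theta_0 + 2\pi(k_2-k_1),
\]
so $\tau_n = \tfrac{2}{B}n\sin\theta_0$ modulo $\tfrac{2}{B}$. Adding them gives $\Phi_n\equiv\pi(k_1+k_2)$, so $\Phi_n\in\{0,\pi\}$. The free integers must then be fixed by minimizing the $L_2$ phase error over the full bands, not just at their centers. This is where I expect the constant $3/(2B)$ to appear instead of $2/B$. A least-squares line fit to a step function of height $2a$, with $a=n\pi\sin\theta_0$, over $f\in[-B/2,B/2]$ has slope
\[
\frac{\int f\cdot a\,\mathrm{sgn}(f)\,df}{\int f^2\,df} = \frac{a\,B^2/4}{B^3/12} = \frac{3a}{B}.
\]
Hence $2\pi\tau_n = \tfrac{3}{B}\,n\pi\sin\theta_0$, that is, $\tau_n=\tfrac{3}{2B}n\sin\theta_0$. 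Wrapping this slope by the period induced by the $2\pi$ ambiguity gives the modulus $\tfrac{3}{2B}$. The additive offset $\tfrac{3}{4B}$, half the period, comes from centering the residual fit error within the period, equivalently from placing the wrap point symmetrically between the bands. The phase comes out as $\Phi_n=\text{round}(n\sin\theta_0)\pi$: the intercept of the fit is $0$ up to the integer multiple of $\pi$ absorbed when $n\sin\theta_0$ is reduced to its fractional part before fitting. This explains the rounding in \eqref{eq:phase_closed_proof}.

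\textbf{Main obstacle.} The difficult step is justifying the unwrapping. The $L_2$ objective on wrapped phases is nonconvex, and it is not obvious why reducing $n\sin\theta_0$ by $\text{round}(\cdot)$ should both fix $\Phi_n$ and produce the specific offset $3/(4B)$ together with the modulus $3/(2B)$. My first attempt will be to write $n\sin\theta_0=m_n+r_n$ with $m_n=\text{round}(n\sin\theta_0)$ and $|r_n|\le 1/2$. I would then show that the integer part contributes only the phase $m_n\pi$ and the delay shift that is absorbed by the modulus. The remaining step is to verify that, with $|r_n|\le 1/2$, the unwrapped least-squares solution is optimal among all branches.
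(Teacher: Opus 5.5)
You follow the paper's route in your second half: a per-antenna least-squares line fit to the step target, with $2\pi$ wrapping. But the proposal stops exactly where the theorem's specific form is decided, and it leaves that step unproven.

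The ``Solving the constraint'' step uses a different criterion, two-point interpolation at $f=\pm B/4$. It yields slope $2/B$, and the least-squares line does not satisfy those center equations. Its integers $k_1,k_2$ are also not the ones you optimize later, so drop it.

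More importantly, fitting the unwrapped step and then ``wrapping the slope'' skips the key step. The integer must enter the target \emph{before} the fit, as in the paper's redefinition of the lower-band target as $2\pi k-n\pi\sin\theta_0$. Write this target as $g_k(f)=k\pi+\pi(n\sin\theta_0-k)\,\mathrm{sgn}(f)$ on $[-B/2,B/2]$. Each branch is then an ordinary convex least-squares problem, solved by
\[
\Phi_n=k\pi,\qquad \tau_n=\frac{3}{2B}\,(n\sin\theta_0-k),
\]
with minimal $L_2$ error $\frac{\pi^2B}{4}(n\sin\theta_0-k)^2$. This branch cost is an explicit parabola in $k$, so $k=\text{round}(n\sin\theta_0)$. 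The nonconvexity you flag as the main obstacle disappears, and one $k$ fixes $\Phi_n$ and $\tau_n$ consistently. This is what your $m_n+r_n$ idea was reaching for.

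Your account of the modulus and the offset is also wrong. The modulus $\frac{3}{2B}$ is not a period of the delay the way $2\pi$ is for the phase. Shifting one antenna's $\tau_n$ by $\frac{3}{2B}$ multiplies its weight by $e^{j3\pi f/B}$, which is not constant over the band, so the beam pattern changes. The value $\frac{3}{2B}$ is only the spacing between branches, and the ``mod'' encodes the choice of branch.

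The offset $\frac{3}{4B}$ has nothing to do with centering the fit error. The rounded branch gives $\tau_n\in[-\frac{3}{4B},\frac{3}{4B}]$. Adding the common delay $\frac{3}{4B}$ to every antenna multiplies $G(f,\theta)$ by $e^{j2\pi f\cdot 3/(4B)}$ and leaves $|G|$ unchanged. This is exactly the frequency-dependent common phase you discarded at the outset. The shift makes all delays nonnegative. The identity $x-\text{round}(x)+\tfrac12=(x+\tfrac12)\bmod 1$, with $x=n\sin\theta_0$, then turns the result into the stated delay formula. The $+\frac{3}{4B}$ inside the mod is what makes it select the rounding branch rather than the floor branch.
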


Before proving the theorem, we want to draw two important insights from this closed-form formula. First, the computation of delays and phases is like a plug-and-play model by putting the desired angle-frequency pairs into the formula. Unlike past methods, this does not require iterative optimization, thus reducing the computational complexity to O(1), allowing fast computation in FPGAs. Second, similar to how phase is bounded by $2\pi$, the delays are also bounded by a max value of $3/2B$, independent of the number of antennas. This shows the promise of scaling the \dpa architecture to arbitrarily large antenna arrays without requiring longer delay lines that are very hard to achieve in practice (discussed in Section~\ref{sec:hardware}).

\begin{figure}[t!]
    \centering
    \includegraphics[width=0.46\textwidth]{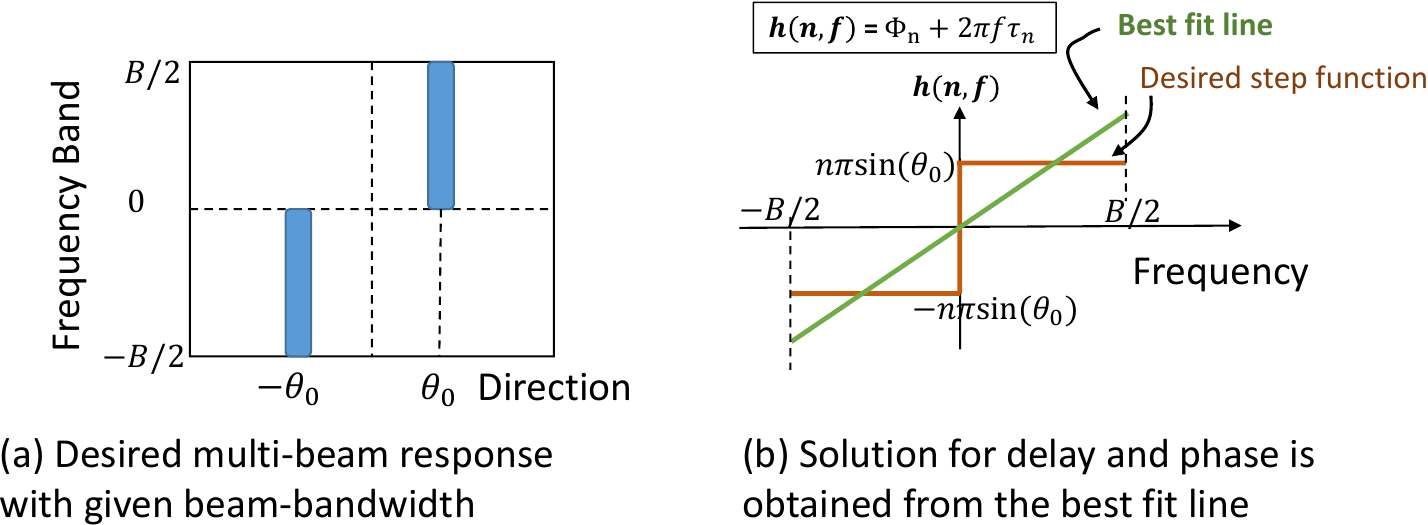}
    \caption{Proof of estimating closed-form delay and phase values for a given two-beam response at $\pm\theta_0$.}
    \vspace{-0.02\textheight}
    \label{fig:dpa_proof}
\end{figure}

\begin{figure*}[!t]
  \begin{minipage}[b]{0.18\linewidth}
    \centering
    \includegraphics[width=\linewidth]{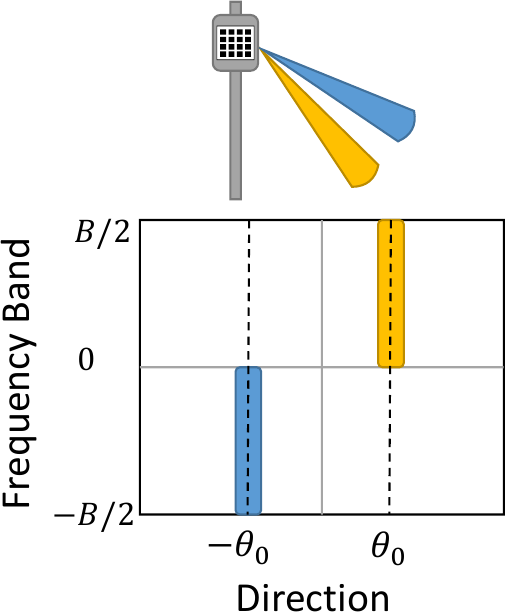}
    \caption{Desired frequency-space beam response for 2 users at directions $-\theta_0$ and $\theta_0$.}
    \label{fig:proof_desired_response}
  \end{minipage}
  \hspace{0.01\linewidth}
  \begin{minipage}[b]{0.79\linewidth}
    \centering
    \includegraphics[width=\linewidth]{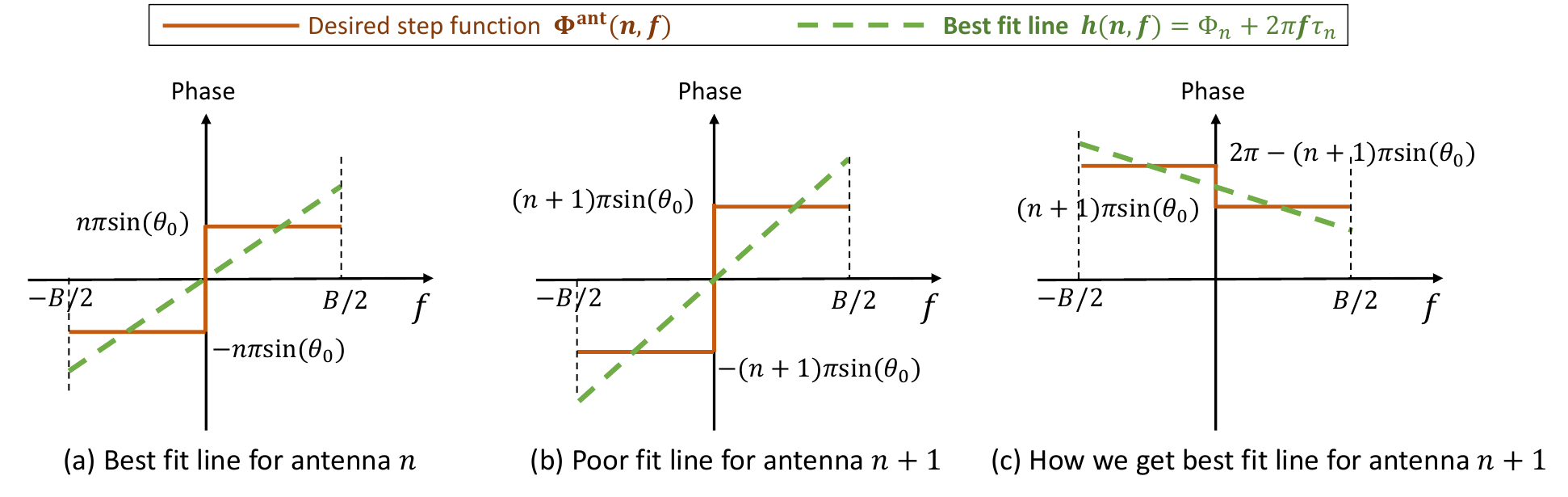}
    \caption{
    Sketch of proof: We show the phase response for each antenna is a step function with a variable step size that depends on the antenna index, $n$. 
    }
    \label{fig:proof_best_line_fit}
  \end{minipage}
\end{figure*}

\textbf{Proof of Theorem 1: Two-beam case}
We drive the expression of delays and phases per antenna that would give us the beamforming gain pattern with the desired bandwidth part and beam direction for the two-beam case.

\begin{figure} [!t]
\centering
\subfigure[Without optimization]{
    \includegraphics[width=0.22\textwidth]{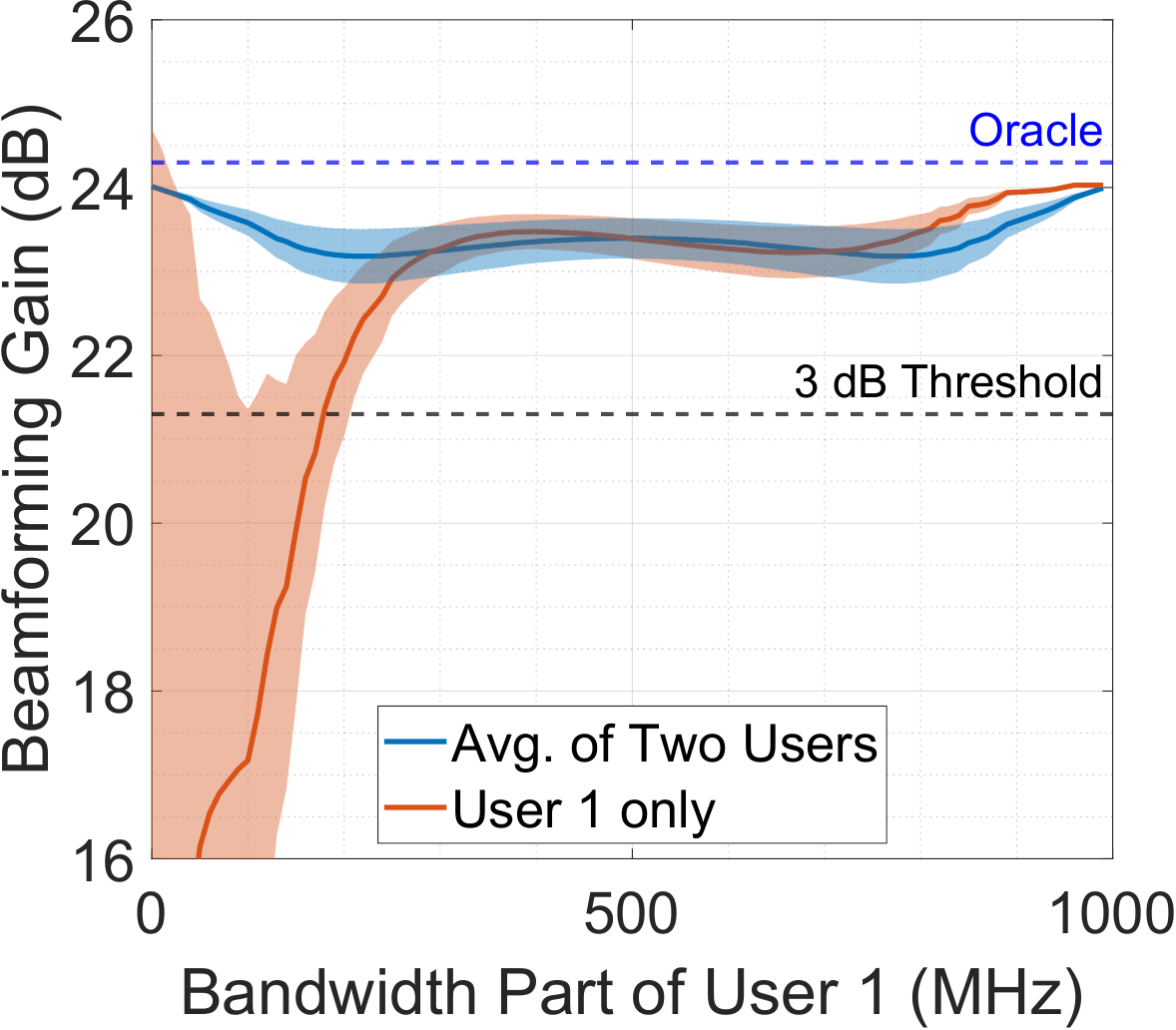}
    \label{fig:impact_bw_fraction_without_thresh}
  }\hfill
  \subfigure[With \name optimization]{
    \includegraphics[width=.22\textwidth]{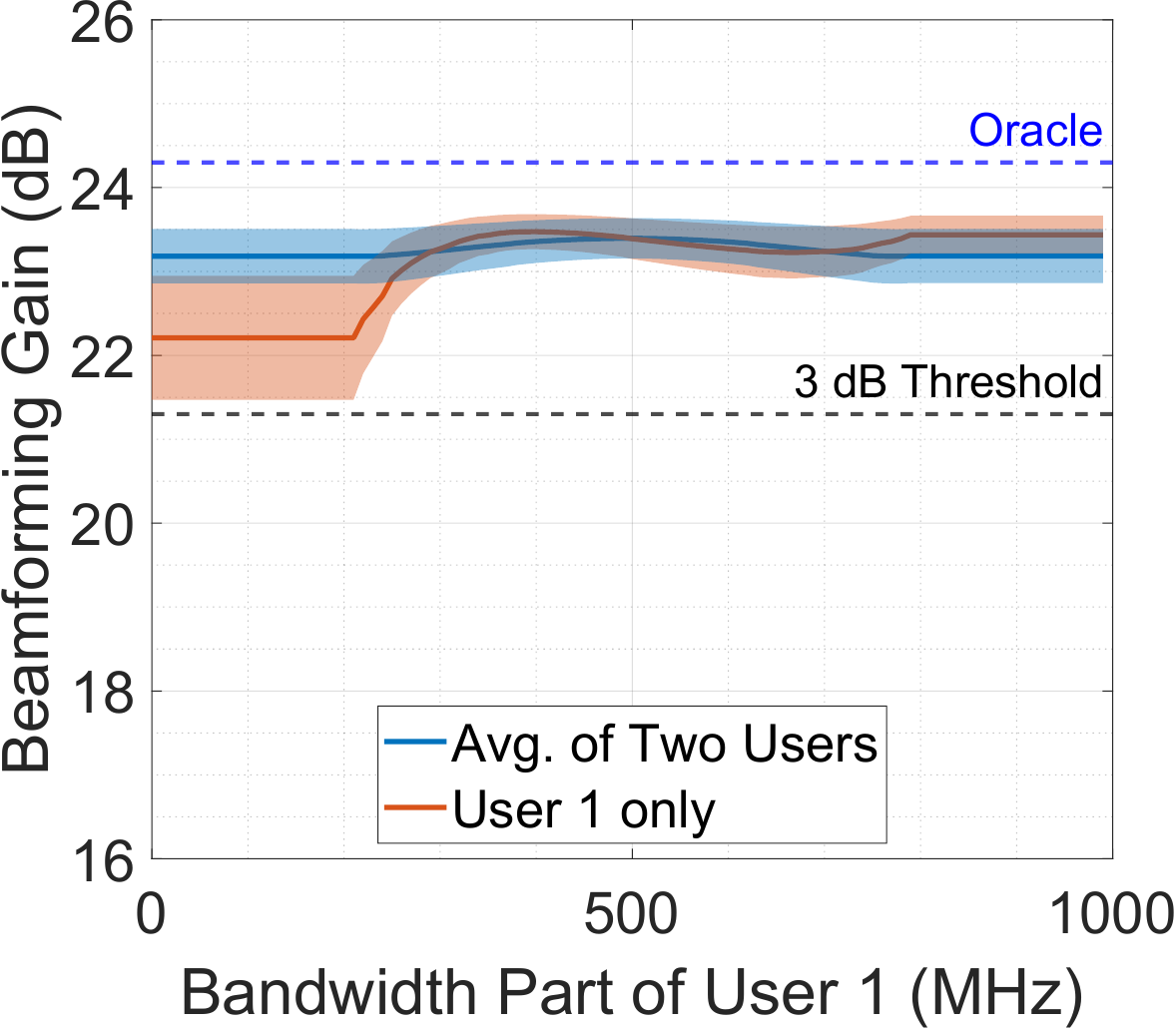}
    \label{fig:impact_bw_fraction_with_thresh}
  }      
  \vspace{-0.02\textheight}
  \caption{Impact of low bandwidth fraction and mitigation. 
    }
  \vspace{-0.02\textheight}
    \label{fig:impact_bw_fraction}
\end{figure}

We reformulate the optimization in (3) to include two-beam constraints as:

\vspace{-5pt}
\begin{equation}\label{eq:constraint}
\begin{split}
    \max_{\tau_n,\Phi_n} &\left\|\sum_{n=0}^{N-1} e^{j h(n,f)} e^{-j\Phi^\ant(n,f)}\right\|^2\\
    s.t. \quad  &h(n,f)=\Phi_n+2\pi f \tau_n\\
    \text{and}\quad &\Phi^{\text{ant}}(n,f) = \begin{cases}
       -n\pi\sin(\theta_0) & f\in[\frac{-B}{2},0]\\
       n\pi\sin(\theta_0) & f\in(0,\frac{B}{2}]\\
    \end{cases}
\end{split}
\end{equation}

where $h(n,f)$ is a function of variable phase $\Phi_n$ and delay $\tau_n$ at antenna $n$ and the function $\Phi^{\text{ant}}(n,f)$ represents the constraints from the desired frequency-direction response. 

We propose an optimization framework that can help to find a closed-form expression for delays and phases. Our optimization problem is formulated in a way that finds the line $h$ that best fits the given step function $\Phi^\ant$. We achieve this by approximating our optimization problem in (7) to the best line-fitting on a per-antenna basis:
\begin{equation} \label{eq:error_objective}
\begin{split}
        &\min_{\tau_n,\Phi_n} ||h(n,f) -  \Phi^\ant(n,f)||^2\\
\end{split}
\end{equation}
We can visualize this optimization in Figure \ref{fig:proof_best_line_fit}(a), where the line $h(n,f)$ is fit over the step function $\Phi^\ant(n,f)$. The slope of the best-fit line gives the delay value, and the y-intercept gives the phase value. In this way, we can estimate both delay and phase values by solving for the best-fit line.

\noindent
\textbf{$\blacksquare$ Impact of linear approximation:}
However, as antenna index $n$ increases, the error in line fitting also increases due to the linear increase in the step size with $n$, as shown in Figure~\ref{fig:proof_best_line_fit}(b). This could lead to high error for large antenna arrays and limit our solution to scale with antennas.

\noindent
\textbf{$\blacksquare$ Minimizing linear approximation error:}
We have an innovative and simple solution to address this issue.  To address this issue, we utilize the concept of wrapping the phase of a signal by $2\pi$, i.e., adding an integer multiple of $2\pi$ to the phase does not change the signal. We use this idea to strategically add a phase of multiple of $2\pi$ to a specific set of frequencies in order to minimize the error in line fitting as shown in Figure~\ref{fig:proof_best_line_fit}(c). With this insight, we redefine the step function $\Phi^\ant$ as:
\begin{equation}
    \Phi^{\text{ant}}(n,f) = \begin{cases}
        k2\pi-n\pi\sin(\theta_0) & f\in[\frac{-B}{2},0]\\
        n\pi\sin(\theta_0) & f\in(0,\frac{B}{2}]\\
    \end{cases}
\end{equation}
where $k$ is a constant integer. A natural question is how do we estimate this integer to minimize the error in line fitting? Our solution is a two-step process: we solve for the delays and phases as a function of $k$ and then find the optimal value of $k$ to minimize the error.

To solve for per-antenna delays and phases, we form a system of linear equations. We discretize the frequency as $f=m\Delta f$ for $m\in [-M/2, M/2]$, where the bandwidth is $B=(M+1)\Delta f$. Note that there are $M$ frequency bins that can be a large number, i.e., $M\rightarrow \infty$ for creating a continuous frequency axis. We then formulate a set of linear equations for each frequency term to solve for the variable delay $\tau_n$ and phase $\Phi_n$ for each antenna $n$, given by:
\begin{equation}
    \Phi_n + 2\pi m\Delta f\tau_n = \Phi^\ant(n,m\Delta f) \;\forall m\in [-M/2, M/2]
\end{equation}
which is in the form of a system of linear equations ($Ax=b$), with unknown $x = [\tau_n,\Phi_n]$. Solving this system of linear equations gives the desired estimate of delay and phase values.

\begin{figure*}[ht]
    \centering
    \begin{minipage}[t]{0.32\textwidth}
        \centering
        \includegraphics[width=\textwidth]{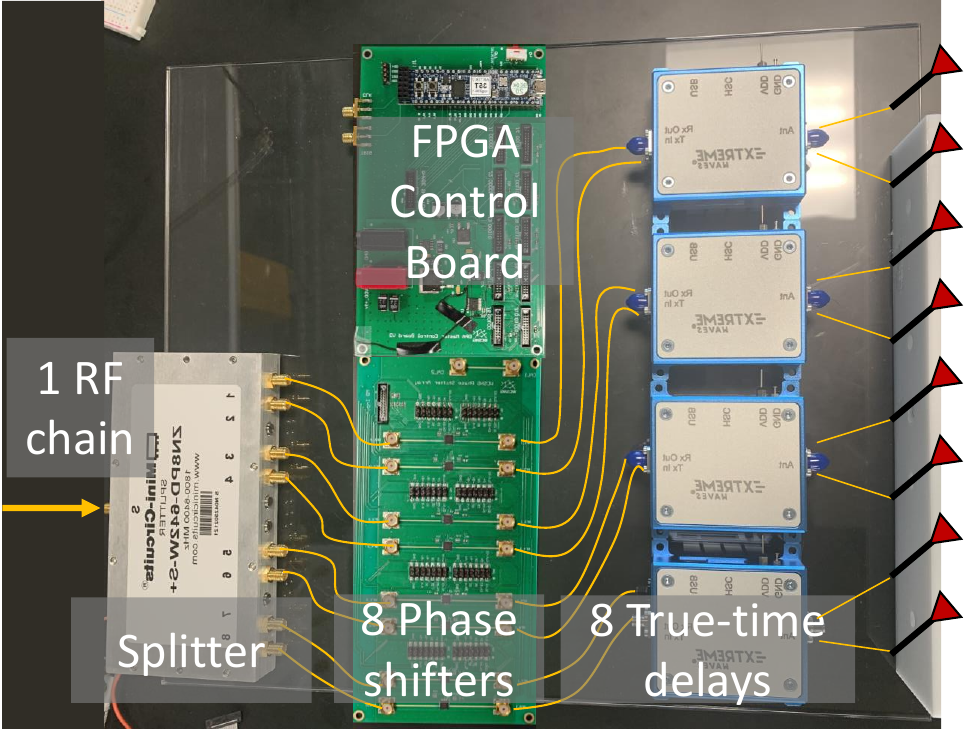}
        \caption{Implementing 8 antenna delay-phased array at 4-7 GHz band. }
        \label{fig:dpa_impl}
    \end{minipage}
    \hfill
    \begin{minipage}[t]{0.32\textwidth}
        \centering
        \includegraphics[width=\textwidth]{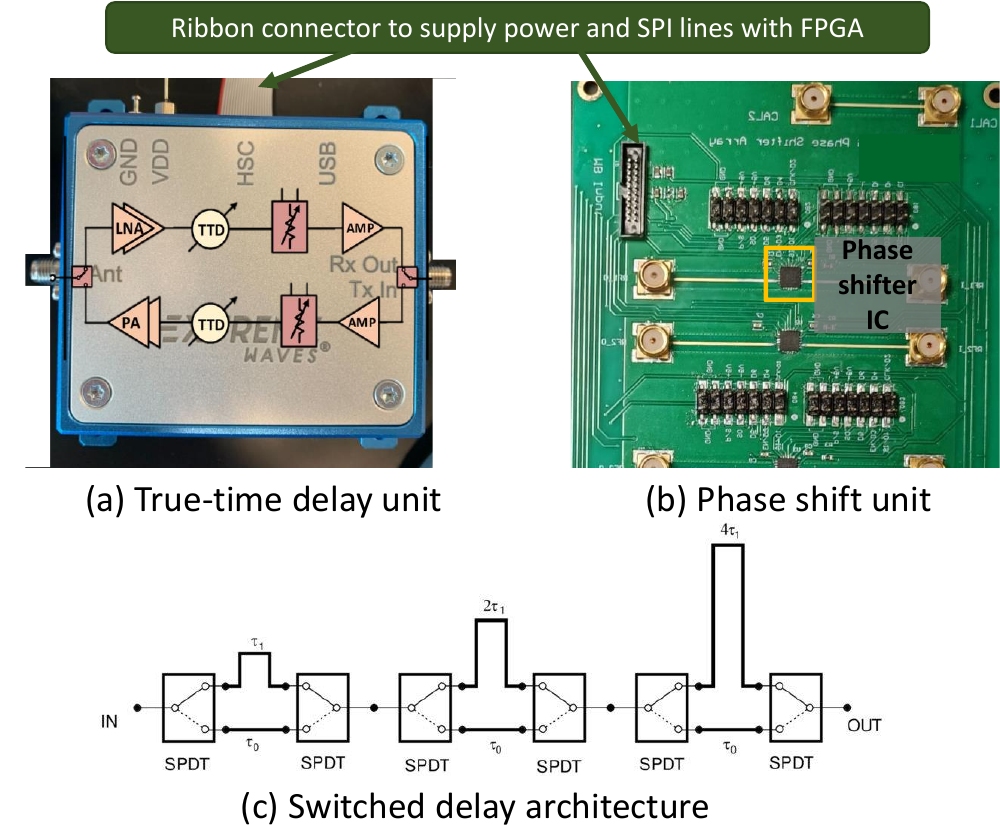}
        \caption{Hardware components of TTD unit and phase shift unit.}
        \label{fig:ttd_internal}
    \end{minipage}
    \hfill
    \begin{minipage}[t]{0.32\textwidth}
        \centering
        \includegraphics[width=\textwidth]{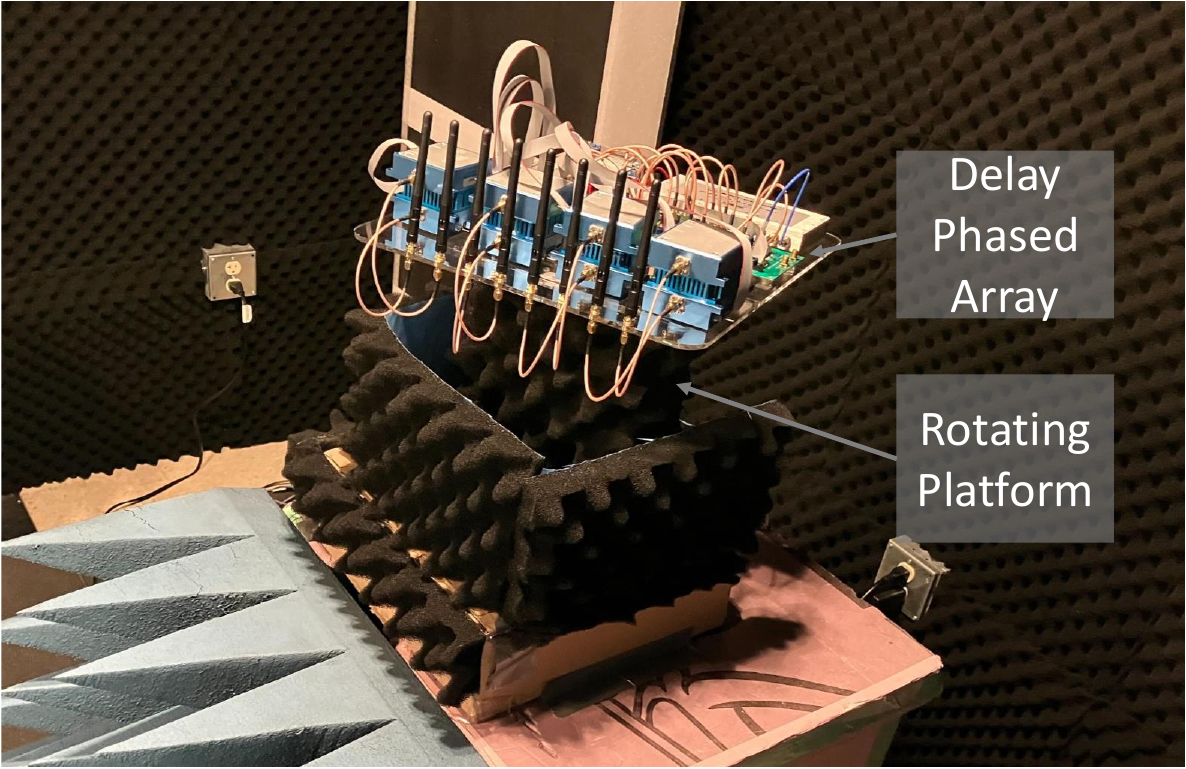}
        \caption{Beam pattern measurement in an anechoic chamber.}
        \label{fig:anechoic_chamber}
    \end{minipage}
\end{figure*}

\noindent
\textbf{$\blacksquare$ Generalization to an arbitrary number of beams:} We also show generalized beamforming response to an arbitrary number of beams with arbitrary beam directions and arbitrary bandwidth parts.

\begin{theorem}\label{th:generalized}
\textbf{(Generalized case):} Let there are $D$ beam directions with beam angles $\theta_d$ and bandwidth part $\alpha_d B$ for $\sum_d \alpha_d = 1$. We define $\phi_d = n\pi\sin(\theta_d)$ for simplicity. The per-antenna phases and delays in realizing such a generalized beamforming response is given by:
\begin{equation}\label{eq:phase_gen}
    \Phi_n = \sum_{d=1}^D\alpha_d(\phi_d+2k_d\pi)
\end{equation}
\begin{equation}\label{eq:delay_gen}
    \tau_n = \sum_{d=1}^D\frac{3}{\pi B}(\phi_d+2\pi k_d)\alpha_d(\sum_{\ell=1}^{d}2\alpha_{\ell} -\alpha_d -1)
\end{equation}
where,
\begin{equation}
    \phi_d = n\pi\sin(\theta_d)
\end{equation}
and the constant integer $k_d$ for beam $d$ and antenna $n$ is:
% \begin{equation}
%     \phi_d = n\pi\sin(\theta_d)
% \end{equation}
% and
\begin{equation}\label{eq:kd_gen}
    k_d =\begin{cases}
    &0\quad d=1\\
    &k_{d-1}\!+\!\text{round}(\frac{n\sin(\theta_{d-1})-n\sin(\theta_d)}{2}) \;\;   d\ge 2
    \end{cases}
\end{equation}
% is a constant integer.\\

\end{theorem}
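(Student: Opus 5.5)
We follow the same route as the proof of Theorem~\ref{th:2beamcase}: replace the gain-maximization by a per-antenna least-squares line fit of $h(n,f)=\Phi_n+2\pi f\tau_n$ to a piecewise-constant target $\Phi^\ant(n,f)$. Order the beams so that beam $d$ occupies the contiguous band $f\in[f_{d-1},f_d)$, where $f_0=-B/2$ and $f_d=-B/2+B\sum_{\ell\le d}\alpha_\ell$. On that band we set
\begin{equation*}
\Phi^\ant(n,f)=\phi_d+2\pi k_d .
\end{equation*}
The integers $k_d$ are the phase-wrapping offsets. They do not change the radiated signal, and we choose them to keep the step function close to a line.

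\textbf{Step 1 (least squares in the continuum limit).} For fixed $k_d$ we minimize
\begin{equation*}
\int_{-B/2}^{B/2}\bigl(\Phi_n+2\pi f\tau_n-\Phi^\ant(n,f)\bigr)^2\,df ,
\end{equation*}
which is the $M\to\infty$ limit of the linear system $Ax=b$. The frequency grid is centered, so $\int f\,df=0$, and the normal equations decouple. The phase is the band average,
\begin{equation*}
\Phi_n=\frac{1}{B}\int\Phi^\ant\,df=\sum_d\alpha_d(\phi_d+2\pi k_d),
\end{equation*}
which gives (\ref{eq:phase_gen}). The delay is
\begin{equation*}
2\pi\tau_n=\frac{\int f\,\Phi^\ant\,df}{\int f^2\,df}, \qquad \int f^2\,df=\frac{B^3}{12}.
\end{equation*}
Each band contributes $\int_{f_{d-1}}^{f_d}f\,df=\tfrac12(f_d^2-f_{d-1}^2)=\tfrac12\alpha_dB(f_d+f_{d-1})$. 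Substituting $f_d+f_{d-1}=B\bigl(\sum_{\ell\le d}2\alpha_\ell-\alpha_d-1\bigr)$ gives
\begin{equation*}
\tau_n=\frac{12}{2\pi B^3}\cdot\frac{B^2}{2}\sum_d\alpha_d(\phi_d+2\pi k_d)\Bigl(\sum_{\ell\le d}2\alpha_\ell-\alpha_d-1\Bigr),
\end{equation*}
which is exactly (\ref{eq:delay_gen}) with prefactor $3/(\pi B)$. As a sanity check, $D=2$ and $\alpha_d=1/2$ should recover the form of Theorem~\ref{th:2beamcase}.

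\textbf{Step 2 (choosing $k_d$).} The residual of the line fit grows with the jumps between consecutive levels. We therefore fix $k_1=0$ and choose each $k_d$ greedily so that the jump
\begin{equation*}
(\phi_d+2\pi k_d)-(\phi_{d-1}+2\pi k_{d-1})
\end{equation*}
lies in $(-\pi,\pi]$. Solving this for $k_d-k_{d-1}$ gives
\begin{equation*}
k_d-k_{d-1}=\text{round}\!\left(\frac{\phi_{d-1}-\phi_d}{2\pi}\right)=\text{round}\!\left(\frac{n\sin\theta_{d-1}-n\sin\theta_d}{2}\right),
\end{equation*}
which is the recursion (\ref{eq:kd_gen}).

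\textbf{Main obstacle.} The real difficulty is justifying that this choice of $k_d$ minimizes the error, or at least that it is near-optimal. The residual is a quadratic in the vector $(k_d)$, and a greedy per-step rounding need not minimize a coupled integer quadratic. We would argue that bounding every jump by $\pi$ keeps the whole target step function inside a band of width $O(D\pi)$ that is independent of $n$. It follows that the fit error, and hence $|\tau_n|$, stay bounded in the array size. This is enough to justify the approximation, in line with the paper's insight that the delay range does not grow with the number of antennas. A full optimality claim we would leave as heuristic, or prove only for $D=2$, where rounding is exactly optimal.
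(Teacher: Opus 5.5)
Your proposal is correct and follows the paper's own argument, which is the two-beam phase-wrapped line fit that the general case is said to ``follow similarly'': the continuum least-squares fit reproduces (\ref{eq:phase_gen}) and (\ref{eq:delay_gen}) exactly, your jump-rounding rule is (\ref{eq:kd_gen}), and for $D=2$, $\alpha_d=1/2$ you recover Theorem~\ref{th:2beamcase} up to the common offset $3/(4B)$, which leaves $|G(f,\theta)|$ unchanged. You need not resolve your ``main obstacle'', because the statement only specifies $k_d$ and the paper proves no optimality for $D\ge 3$; your suspicion is in fact right, since with three equal bands, $\sin\theta_d=-0.5,\,-0.05,\,0.5$ and $n=2$, the recursion gives jumps $(0.9\pi,-0.9\pi)$, whose fit residual is about $2.4$ times that of the alternative $(0.9\pi,1.1\pi)$.
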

The proof follows similarly to the two-beam case\footnote{Visit \url{wcsng.ucsd.edu/dpa} for details.}. Also, note that the number of multiplications in the computation of phase in (\ref{eq:phase_gen}) is O(K) and delay (\ref{eq:delay_gen}) is O($K^2$), which is independent of the number of antennas or system bandwidth, making it scalable to larger arrays.

\subsection{\textbf{Flexible bandwidth part with \dpa}}\label{sec:arbitrary_bp}
A key requirement of \name is supporting arbitrary bandwidth parts. However, line-fitting errors limit \dpa multi-beamforming, especially when a user’s bandwidth part is very small. Figure~\ref{fig:impact_bw_fraction_without_thresh} shows that while overall two-user gain remains high, users with less than 20\% of system bandwidth suffer more than 10 dB loss due to skewed step functions and poor line fitting.

To mitigate this, we introduce a threshold mechanism that prevents any beam’s gain from dropping more than 3 dB below optimal. This is enforced by setting a minimum bandwidth threshold of 20\% (corresponding to 3 dB loss), though the value can be adjusted (e.g., 25\% for 1.5 dB, 15\% for 6 dB). If a beam requests less than 20\%, we reassign delays and phases as if in a 20\%–80\% split. As shown in Figure~\ref{fig:impact_bw_fraction_with_thresh}, this caps the low-bandwidth beam’s loss at 3 dB, while the other beam experiences less than 1 dB loss, an acceptable trade-off. This heuristic thus enables support for arbitrarily small bandwidth parts without severe performance degradation.

\subsection{\textbf{\dpa Hardware Design}}\label{sec:hardware}

To demonstrate the practical performance of \name, we designed a custom prototype of \dpa hardware in the lab using commercial true-time delay chips and phase shifter chips operating at 4 GHz--6 GHz frequencies as shown in Figure~\ref{fig:dpa_impl}. Although generally, there are differences between sub-6 GHz and mmWave systems, the generation of desired beam patterns is agnostic to center frequency and can be validated with our sub-6 setup. Here, we discuss the \dpa hardware requirements in terms of range, resolution, wideband design of delay and phase circuits, and implementation setup.

\noindent
\textbf{DPA Hardware Overview:} We describe the hardware shown in Figure~\ref{fig:dpa_impl} from a downlink perspective and emphasize that the uplink path follows a similar reverse order. The input to \dpa hardware is a single analog radio frequency waveform at the desired center frequency, which can be generated from a single DAC and a series of mixers (e.g., a signal generated from a USRP). We then split this RF signal into 8 equal parts using a 1:8 splitter and pass the 8 copies of the signal to 8 phase shifters. The phase-shifted signal then passes through 8 true-time delay modules, which are finally connected to 8 antennas. Thus, the hardware resembles the concept diagram we presented in Figure~\ref{fig:dpa_architectures}.

\noindent
\textbf{$\blacksquare$ Range of delay and phase elements in \dpa:}
The range of phase values is a constant $2\pi$ that covers all scenarios because the exponential phase always wraps around $2\pi$. However, the range of delay values is not straightforward because, unlike phase, delays don't generally wrap around a certain value. In fact, prior works on using a delay element for multi-beamforming showed an unbounded delay that increases linearly with the number of antennas~\cite{li2022rainbow}, thus making it hard to build and scale in a practical circuit design because of large size, bandwidth, and matching constraints~\cite{ghaderi2019integrated}. In contrast, we designed \name to have shorter and bounded delay lines that do not scale with the number of antennas. For the two-beam case, the delay range for \name is $\frac{3}{2B}$, which is $1.5 ns$ for 1 GHz bandwidth. the range of delay can be further reduced with a smaller field of view of the array. A $120^o$ angular field-of-view requires only $1 ns$ delay range.

\noindent
\textbf{$\blacksquare$ Wideband Delay and Phase Units:} To meet the ultra-high bandwidth of 5G and the next generation of cellular networks, we need to design a wideband circuit for delay and phase units. Delay lines can be implemented in an integrated circuit using techniques such as passive elements~\cite{li2020800}, all-pass filter~\cite{mondal20172}, and Switched Capacitor design~\cite{lin2023design, ghaderi2019integrated} depending on applications and requirements. Our true-time delay unit is built in-house by Extreme Waves~\cite{ew}, which implemented delays using a switched delay architecture~\cite{szczepaniak2020microwave} as shown in Figure~\ref{fig:ttd_internal} because of its advantage over wideband operation. It deploys a series of SPDT switches connected to two transmission lines of variable length, switching to the longer transmission line would increase delay by a certain unit along the path. With $m$ switches, we can generate $m$ bit quantized delay with $2^m$ possible delay values. The phase shift can also be implemented using transmission lines, but transmission line-based phase shift supports only narrowband frequencies. We choose a varactor-based phase shift design for wideband operation.

\noindent
\textbf{$\blacksquare$ Design of FPGA Control Board PCB:} We designed a PCB board that hosts a CMOD A7 FPGA~\cite{cmod35t} to program the delay and phase values in our hardware. The PCB also contains a power distribution module with appropriate LDOs that takes a single 12V input and distributes different voltage levels (e.g., 5V, -5V, and 3.3V) to the phase shifter and TTD chips via a ribbon cable. 

\noindent
\textbf{$\blacksquare$ Beam Pattern Measurement Setup:} 
The whole setup was laid over a motor that was controlled via an automated software program, as shown in Figure~\ref{fig:anechoic_chamber}. To measure the \dpa beam pattern, we rotate the motor $1^\circ$ at a time in the $\pm 60^\circ$ range. We connected the Tx antenna to the VNA's port 1 and the output of the combiner to its port 2. We saved the S21 parameters for each angle over the operating frequency range. Then, we compute the received power using the measured S21 parameters and plot the power versus frequency and angle. We verified the setup in an anechoic chamber and then conducted over-the-air experiments in a lab environment. We used radiation-absorbent materials to isolate the setup to reduce channel effects, which can be misleading and affect the resulting images.

% \input{4_implementation}
% !TEX root = main.tex

\section{Evaluation }
\subsection{Hardware Results}

We obtained over-the-air signal measurements with two beams at various angles and bandwidth parts. Figure~\ref{fig:expt_sim_split_results} shows example beam patterns across frequency and angle with varying bandwidth parts. The two beams point towards $30^o$ and $-30^o$, as can be visualized through the beam patterns. We reduce the bandwidth of one of the beams from 50\% to 20\% (consequently increasing it for the other beam from 50\% to 80\%). For higher split scenarios such as 30-70, 40-60, we drew similar conclusions as 20-80 and 50-50 splits, so we omitted them for brevity. The gain across the frequency band for the smaller bandwidth part is capturing a smaller bandwidth, and vice versa. The gain is high in the desired frequency-angle bin shown by a black box and low elsewhere, as expected. We compared the measured beam patterns with simulated ones to show the similarities. We ensured our simulation plots consider the same quantization levels and beam squint effect we face in our hardware. Figure~\ref{fig:50} shows that the beamforming gain at the desired angle from hardware matches the simulation. However, the hardware patterns suffer from additional degradation due to various factors such as multipath (we did over-the-air experiments in a conference room), interference with other devices such as WiFi at 5 GHz, and hardware artifacts (e.g., the matching effect of wideband transmission lines, PCB substrate, and delay/phase IC). Nonetheless, the general behavior suggests that the desired frequency-angle patterns are achievable with our hardware.

\begin{figure}
    \centering
    \includegraphics[width=\linewidth]{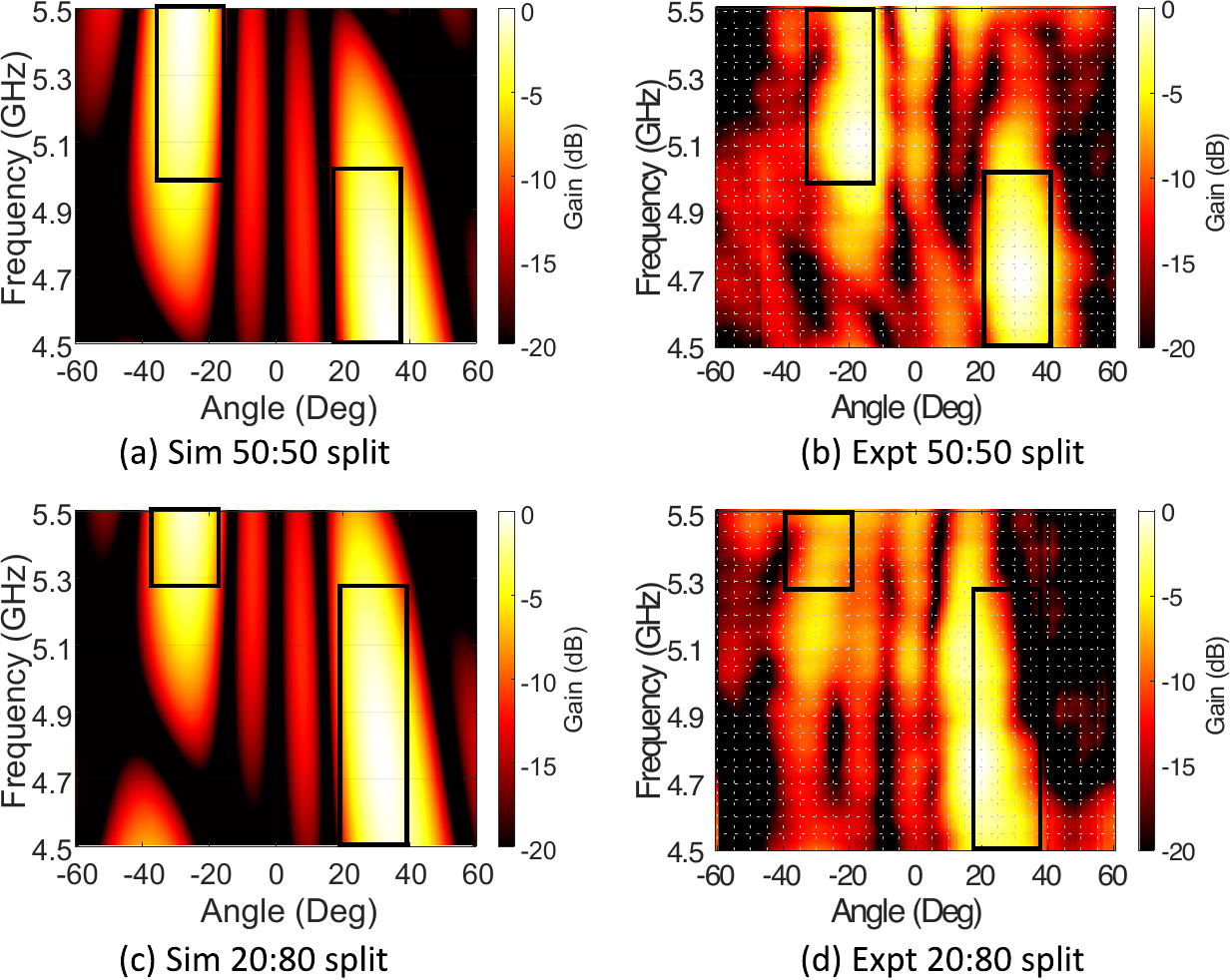}
    \caption{Experimental results of beamforming gain for two beams: $30^o$, $-30^o$ with different bandwidth parts per beam. }
    \label{fig:expt_sim_split_results}
\end{figure}

\begin{figure} [!t]
\centering
    \subfigure[Bandwidth part 50:50]{
        \includegraphics[width=0.22\textwidth]{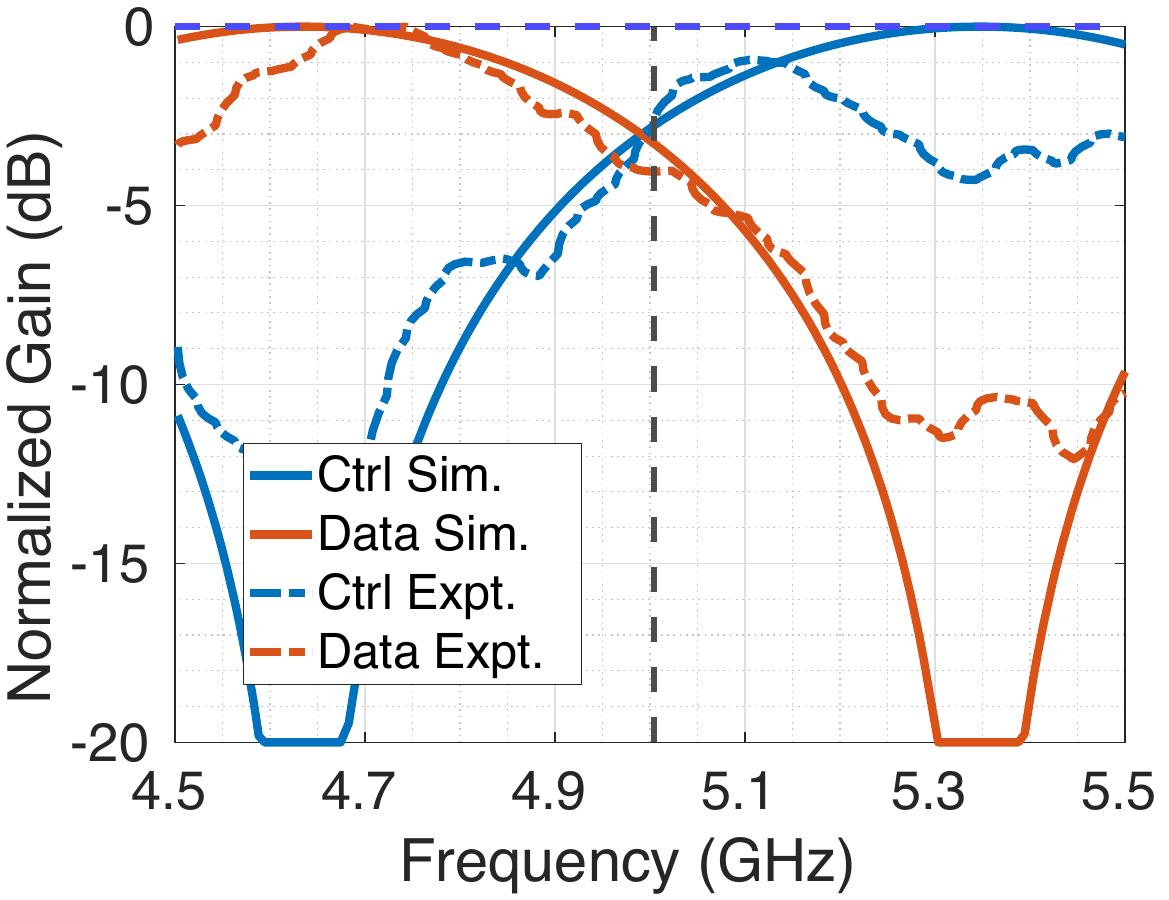}
        \label{fig:50}
      }\hfill
     \subfigure[Gain with subcarrier allocation]{
         \centering
         \includegraphics[width=.22\textwidth]{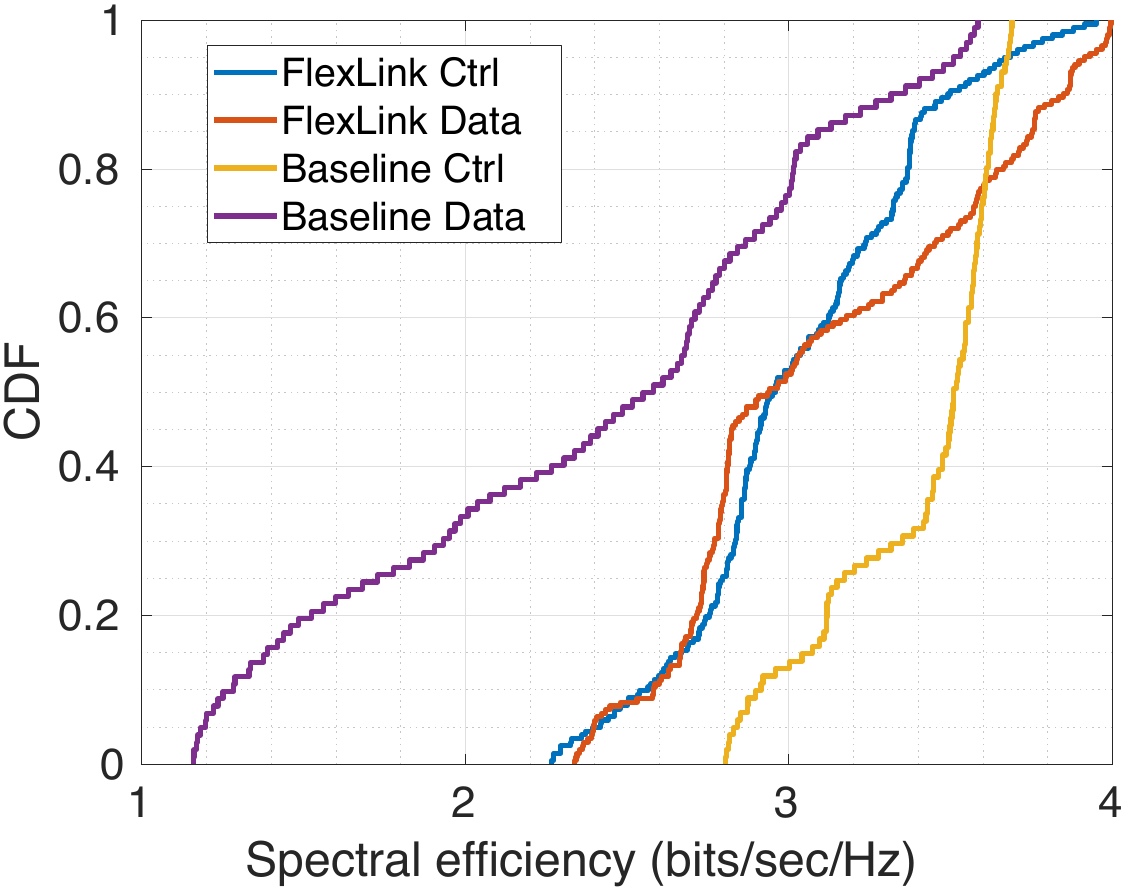}
         % \caption{Gain with subcarrier allocation}
         \label{fig:thput}
     }
     \hfill
    % \subfigure[Gain with number of antennas]{
    %      \centering
    %      \includegraphics[width=.22\textwidth]{figures/flexlink_util_plot.pdf}
    %      % \caption{Gain with number of antennas}
    %      \label{fig:flexlink_util_plot}
    %  }
    \caption{\name improves spectral efficiency and increases spectrum utilization. }
    \vspace{-0.2cm}
 \end{figure}

\noindent
\textbf{Spectral Efficiency Improvement:} 
\name enhances spectral efficiency by enabling high-gain beams for both control and data signals, unlike traditional phased arrays that form a single control beam. As shown in Figure~\ref{fig:thput}, the baseline achieves high efficiency only for control, while \name provides high efficiency for both, nearly doubling data efficiency in the worst 10\% cases. Although control efficiency drops slightly due to linear approximation and hardware artifacts, the overall efficiency remains balanced. Importantly, while the baseline uses only 7\% of spectrum for control (SSB), \name fully utilizes 100\% of the spectrum with both control and data.

\noindent
\subsection{HFSS evaluations}
We use the ANSYS HFSS (High-Frequency Structure Simulator) tool to simulate a \name so that all the non-ideal effects involving substrate waves, material losses, coupling between adjacent elements, and individual antenna element patterns are modeled by providing the appropriate delay and phase values. The element is modeled using a simple inset-fed patch antenna on a Rogers RO4350 substrate with a variable delay + phase applied at the input port (see Figure~(\ref{fig:dpa_hfss_plot}a)). The results of the simulation are shown in Figure~(\ref{fig:dpa_hfss_plot}b) with the main lobe along 0$^{\circ}$ and -30$^{\circ}$.

\subsection{Scaling to large arrays}
Due to the limited 8 antenna array and 1ns max range of delay unit, we cannot perform hardware measurements for a large-scale system. We performed simulations to show the impact of the number of antennas (up to 64 antennas) and bandwidth parts.

\begin{figure}[t!]
    \centering
    \includegraphics[width=0.45\textwidth]{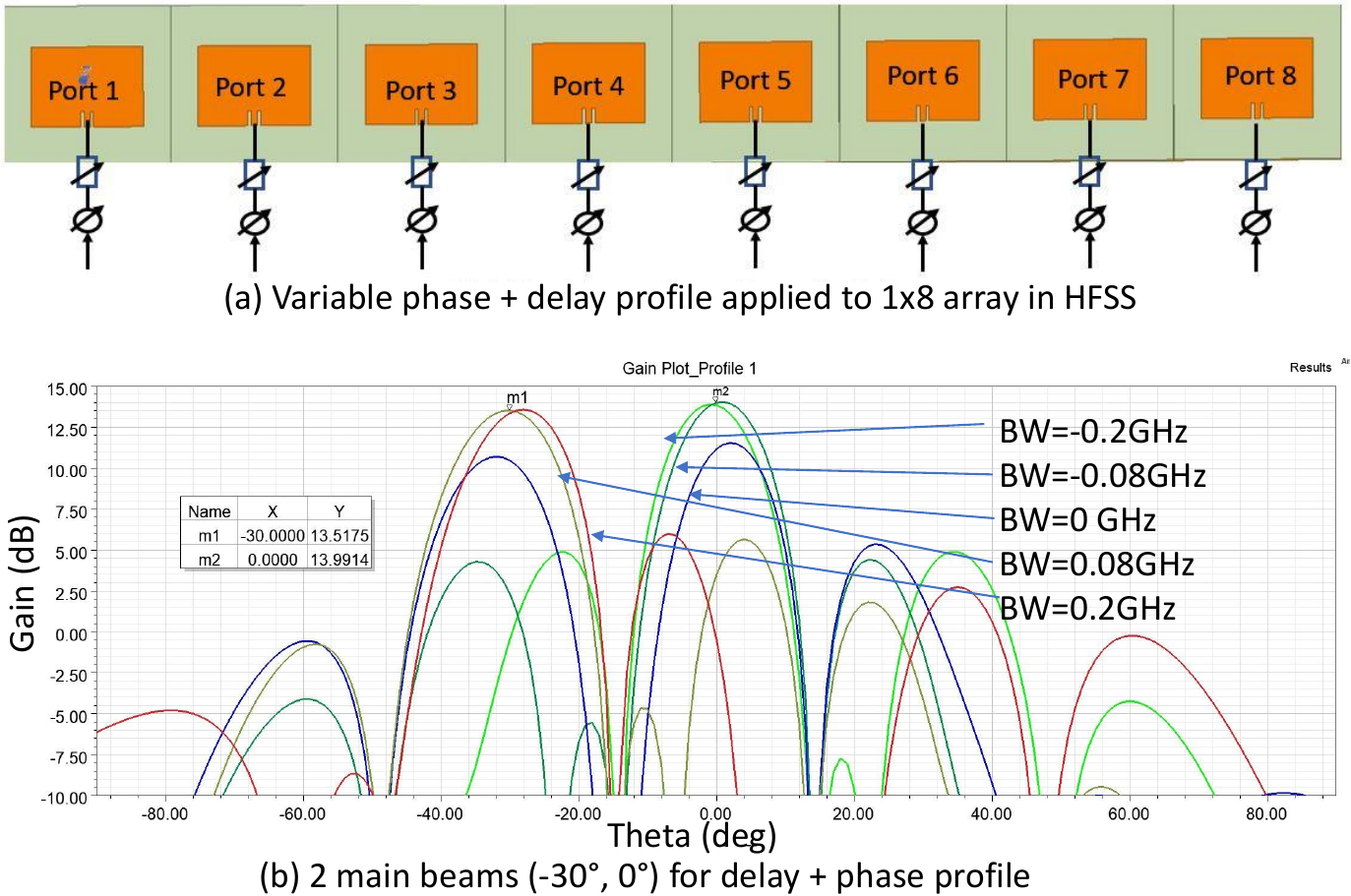}
    \caption{HFSS hardware simulation of \dpa.}
    \vspace{-0.03\textheight}
    \label{fig:dpa_hfss_plot}
\end{figure}

\textbf{Scaling to 3 or more beams:} 
We extensively evaluate \name using a 32-element array and compare it with the high-complexity FSDA algorithm~\cite{jain2023mmflexible}. As shown in Figure~\ref{fig:3beam}(a), \name produces accurate multi-beam patterns for arbitrary beam counts, directions, and bandwidth parts. The antenna gain is high in desired frequency–direction bins, verifying that our linear approximation of the per-antenna phase profile is effective. Only at the frequency edges do we observe gradual transitions, while the beams remain sharp in angular space without leakage to unintended directions.  

% \textbf{Low computation complexity:} 
% \name achieves the same high performance as FSDA while reducing complexity from $O(NM\log(NM))$ to $O(1)$. For instance, with $N=64$ antennas and $M=256$ directions, FSDA requires $\sim$69k multiplications, whereas \name requires fewer than 10 (Figure~\ref{fig:3beam}(b)). This dramatic reduction makes \name highly efficient for FPGA implementation. 

% \textbf{Scaling to 3 or more beams:}
% Here, we provide an extensive evaluation for \name under different scenarios. We compare the beamforming patterns that are produced by our O(1) mathematical expression vs high complexity heuristic-based FSDA algorithm~\cite{jain2023mmflexible}. We utilize a 32-element linear antenna array with programmable phase and delay elements per antenna in various scenarios with three or more beams ( three-beam configuration shown in Figure~\ref{fig:3beam}(a)). 
% \name can reasonably achieve the desired multi-beam patterns with an arbitrary number of beams, beam directions, and bandwidth parts. We observe that the antenna gain is high in the desired frequency-direction pairs and low elsewhere, as expected. This verifies that the linear approximation of the non-linear per-antenna phase profile is fairly accurate. However, at the edges of each beam, there may be a gradual transition instead of a sharp one. This can be seen at the boundary of the frequency band supported by each beam, but it only occurs along the frequency band and not in the angular domain. The beams remain sharp in the intended directions and do not spread to other unwanted angular directions.

\textbf{Low computation complexity:} We show that \name achieves the same high performance with low computational complexity compared to the heuristic algorithm~\cite{jain2023mmflexible}. Figure~\ref{fig:3beam}(b) shows the corresponding delay and phase values obtained by \name and mmFlexible baseline optimization are the same values. These delay and phase values are used to create the beamforming pattern in Figure~\ref{fig:3beam}(a). \name achieves this high performance with a simple O(1) complexity in estimating the delay and phase values, while the baseline requires high $O(NM\log(NM))$ complexity. For instance, $N=64$ antennas and $M=256$ directions, the baseline requires ~69k multiplications, while \name requires less than 10 multiplications, which makes it efficient to program in FPGAs.

\begin{figure*} [!t]
\centering
  \subfigure[3-beam response]{
    \includegraphics[width=.23\textwidth]{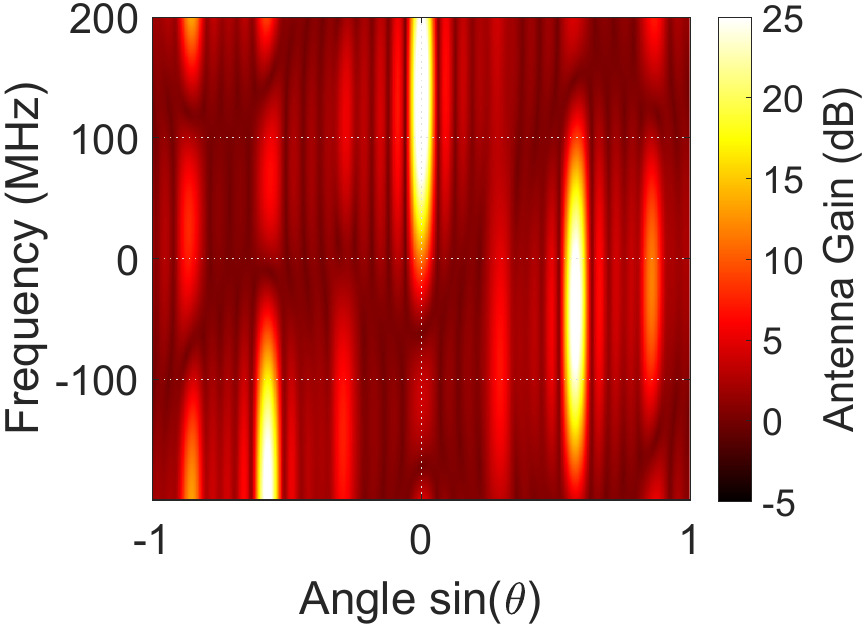}
    \label{fig:image_math_case1}
  }\hfill
  \subfigure[Phases and Delays]{
    \includegraphics[width=.22\textwidth]{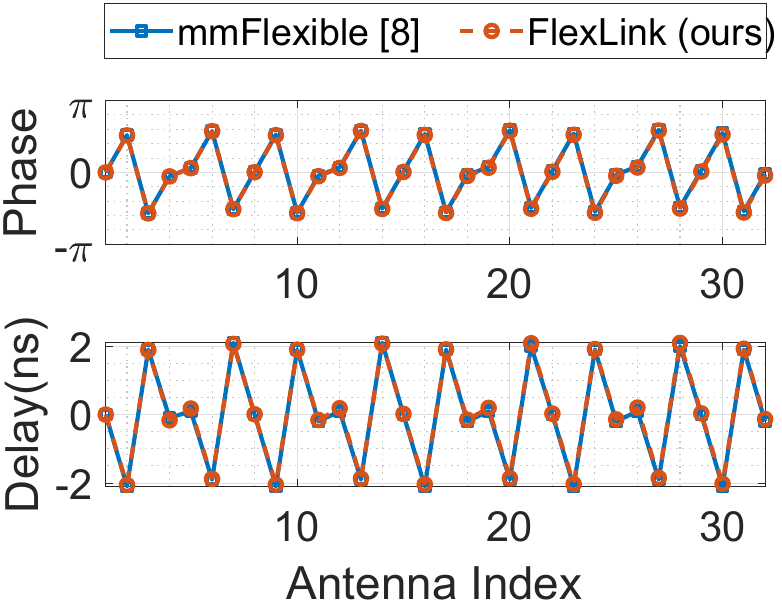}
    \label{fig:phase_delay_beamcase1}
  }\hfill
  \subfigure[Gain with subcarrier allocation]{
         % \centering
         \includegraphics[width=.22\textwidth]{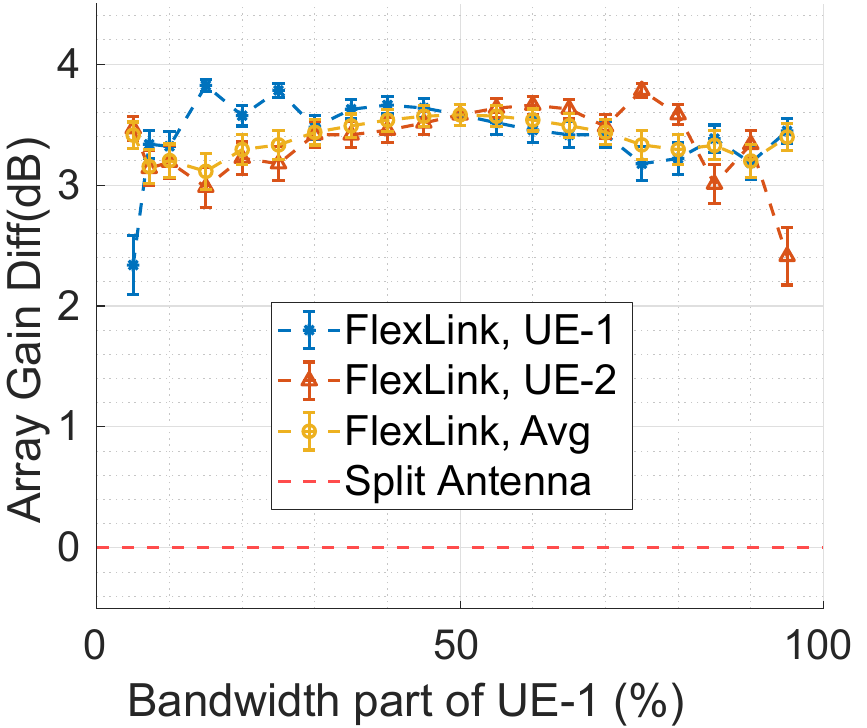}
         % \caption{Gain with subcarrier allocation}
         \label{fig:unequal_allocation}
     }
     \hfill
    \subfigure[Gain with number of antennas]{
         \centering
         \includegraphics[width=.22\textwidth]{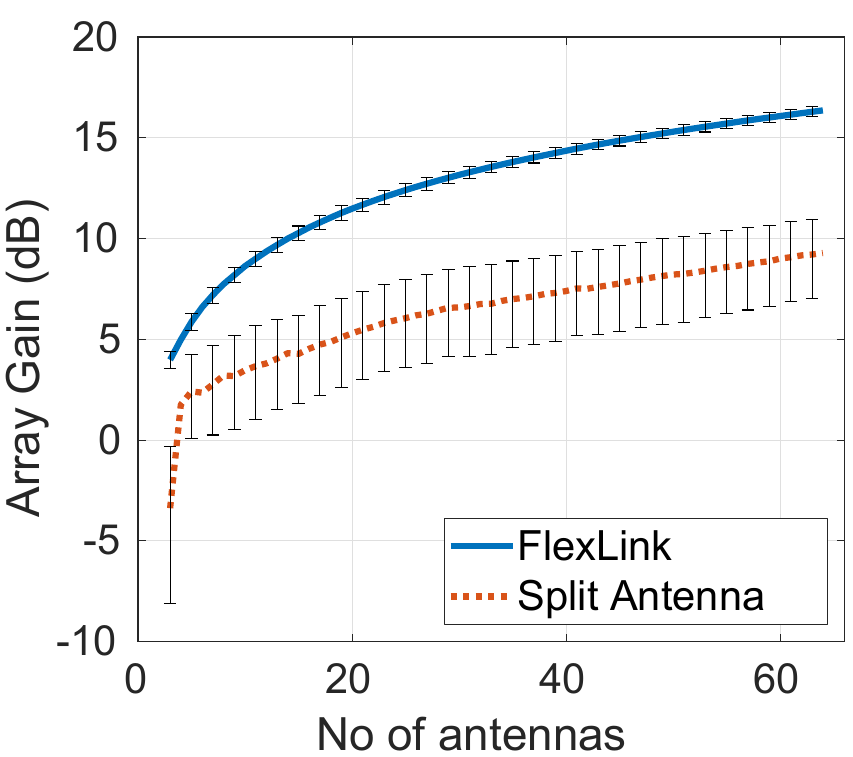}
         % \caption{Gain with number of antennas}
         \label{fig:antenna_variations}
     }
     \vspace{-0.02\textheight}
    \caption{Simulation for large antenna arrays and 3+ number of simultaneous beams. 
    }
    \vspace{-0.015\textheight}
    \label{fig:3beam}
\end{figure*}
\begin{figure}[t!]
    \centering
    \includegraphics[width=0.45\textwidth]{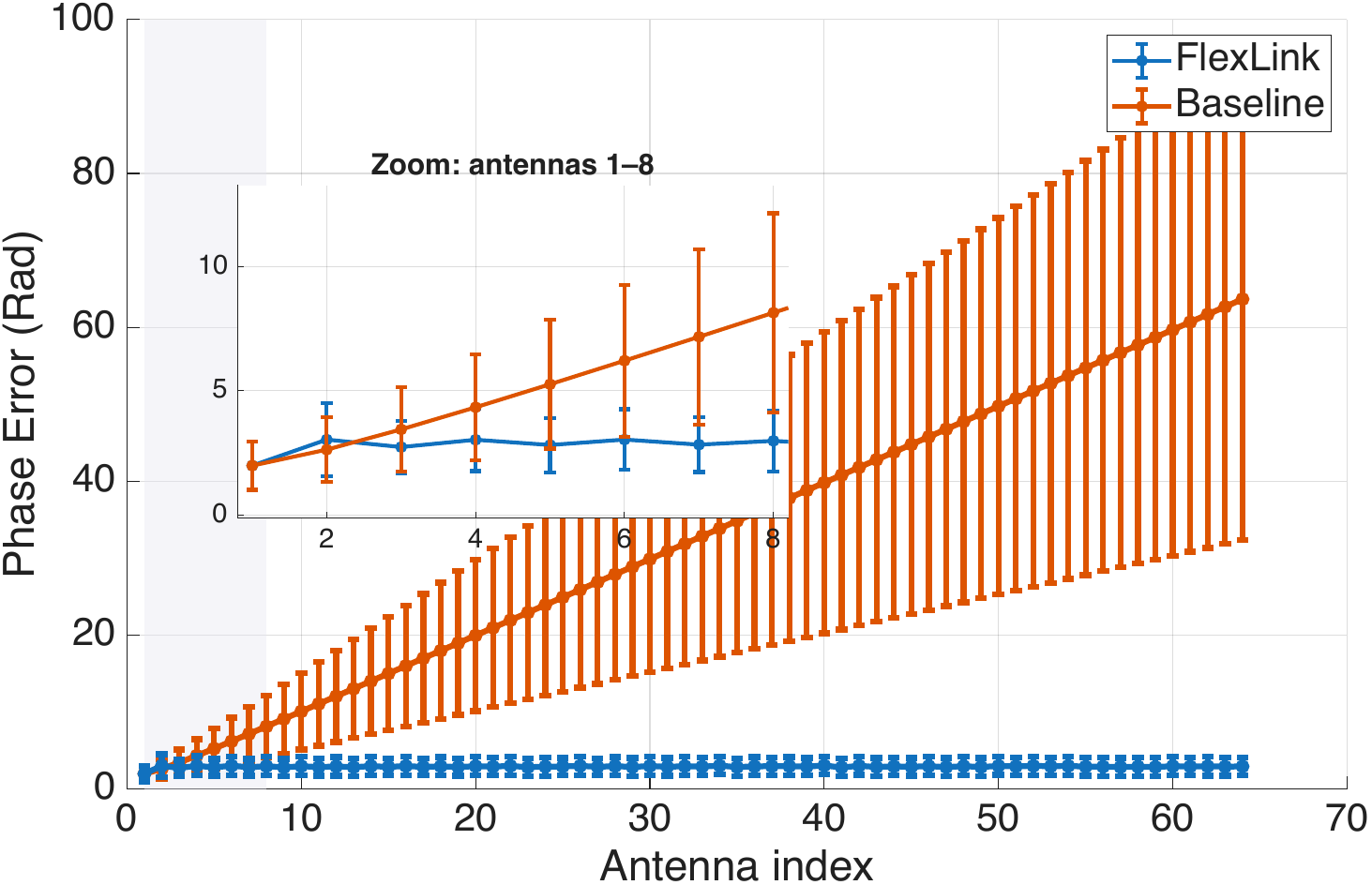}
    \caption{\name reduces phase error in the objective function (\ref{eq:error_objective}) thanks to the innovative phase wrapping technique. }
    \vspace{-0.015\textheight}
    \label{fig:antenna_error_plot}
\end{figure}
\textbf{Impact of bandwidth part allocation:}
Applications such as concurrent communication, control, or scenarios involving both high and low-bandwidth users require consistently high beam gains to ensure low latency and reliability. We show that \name achieves this by maintaining high gain across all users, independent of bandwidth allocation.
To demonstrate, we used a linear 16-element array with $\frac{\lambda}{2}$ spacing and formed two beams in separate directions. If beam 1 (for UE 1) is assigned $x\%$ of sub-carriers, beam 2 (for UE 2) receives the remaining $(100 - x)\%$. We varied $x$ from $5\%$ to $95\%$ and compared gains.
Unlike the split antenna method, which offers constant but lower gain by spanning the full bandwidth in each user direction, \name maintains high and balanced gain, as shown in Figure~\ref{fig:unequal_allocation}. Even at extremes—$5\%$ or $95\%$ allocation—both users experience similar, strong gains. 
Error bars show gain variation due to different beam angle separations. These results confirm that \name supports flexible bandwidth allocation without compromising per-user gain.

% Applications like concurrent communication and control, or use cases with users who require both low and high bandwidth, should have high gains for all beams to user directions to enable low latency and reliable links. Here we demonstrate that \name provides high gains for all users regardless of their bandwidth allocation. To illustrate this, we considered a linear array with 16 antennas and $\frac{\lambda}{2}$ separation and two beams in different directions. Since there are only two beams, if we are giving $x\%$ of sub-carriers to beam 1 (for UE 1), then beam 2 (for UE 2) will get the remaining $(100-x)\%$ sub-carriers. We evaluated gain by varying beam 1 sub-carriers from $5\%$ to $95\%$. The split antenna approach has low constant gain irrespective of sub-carrier allocations, since radiating beams in user directions over the entire bandwidth. Array gain difference with respect to split antenna is shown in Figure~( \ref{fig:unequal_allocation}), demonstrating that even at edge scenarios with beam 1 at $5\%$ sub-carriers or beam 1 at $95\%$, Both users get similar high gains without prioritizing high bandwidth user. 
% The error bars in the respective colors indicate gain variations due to different angle separations between the beams. 
% Thus proving that \name truly enables programmable bandwidth parts without compromising on individual user gain.

\textbf{Scaling with number of antennas:}
Here, we present gain evaluations with an increase in the number of antennas for both \dpa and split antenna techniques. We evaluate both approaches in supporting 3 users in different directions and varying the number of antennas from 3 to 64, as illustrated in Figure~\ref{fig:antenna_variations}, with an increase in the number of antennas, both \dpa and the split technique gain increase with a similar pattern. Note that for the three antennas scenario (to support three user directions), the split antennas approach radiates the same as quasi-Omni. Hence, it has a lower gain than the rest. The error bars in black indicate the variations in gain accommodating different user directions (Monte-Carlo simulations varying user directions). Split antennas as high error bars indicate high gain variations compared to \dpa.

\noindent
\textbf{Significance of phase-wrapping in \name:} Recall that \name uses phase-wrapping to bound the step size in the antenna phase response (Figure~\ref{fig:proof_best_line_fit}). To highlight its impact, we compare against a baseline without wrapping, where step size grows with antenna index. Figure~\ref{fig:antenna_error_plot} shows that \name maintains bounded error ($\sim$3 radians), while the baseline error grows unbounded (8 radians at 8 antennas, 64 radians at 64 antennas). In fact, the resultant beam pattern with this baseline resembles a rainbow beam response~\cite{li2022rainbow}, which spreads all frequencies in all directions uniformly. Thus, phase-wrapping is essential for \name to achieve bounded error independent of array size.

\section{Conclusion and Future work}
% \textbf{$\blacksquare$ Can we use two or more RF chains?}
We presented \name, a system for decoupling control and data signaling through a novel delay-phased array (DPA) antenna architecture. We presented a detailed mathematical analysis of the DPA and derived a closed-form mathematical expression for delay and phase values. We built a hardware prototype for DPA using wideband delay and phase units and showed the feasibility of multi-beam patterns in over-the-air testing and demonstrated its benefits in improving spectrum utilization for control and data signals. 

As a first prototype, \name demonstrates promising capabilities in multi-beamforming and frequency-dependent PHY design. This section discusses remaining system-level challenges and outlines future directions, including mobility support, MAC layer integration, and extension to mmWave frequencies.

\noindent
\textbf{$\blacksquare$ Mobility and dynamic beam switching:} In \name, we focused on a static user with multi-beamforming. Other system-level issues and algorithm designs, such as beam tracking for mobile users, can be adopted from extensive literature in this area. We will discuss how fast \name can perform a beam scan through the control beam while it co-exists with a data beam and compare with traditional approaches, which only use a control beam without any data beam.

\noindent
\textbf{$\blacksquare$ Integration with MAC and higher layers:} \name introduces physical layer optimization tools that incorporate novel hardware and algorithmic approaches for frequency-dependent beamforming. This architectural shift necessitates a reexamination of the end-to-end protocol stack. In particular, a MAC scheduler capable of frequency-directional beam splitting is required to support simultaneous multi-user transmission across distinct spatial directions. While OFDMA-based schedulers have already been adapted to accommodate directional constraints in single-beamforming scenarios, extending these mechanisms to support multi-beamforming remains an important direction for future work.

\noindent
\textbf{$\blacksquare$ Extension to mmWave frequencies:} Implementing a circuit for delay in mmWave frequencies is challenging due to non-linearity, bandwidth, and matching constraints. Therefore, most available delay circuits in mmWave bands are limited to a few picoseconds, which does not meet the requirements of a 1-10 ns delay in DPA. Alternative architectures can be used for mmWave bands where delay elements are implemented at sub-6 frequencies and then upconverted at each antenna using a series of mixers. We leave this exploration for the future.

\section{Acknowledgment}
% We thank YungYi Sun, Sonny Cao, and Nagarjun Bhat for their help with hardware prototyping and experiments, and Prof. Gabriel Rebeiz and Qian Ma from Extreme Waves for providing support with the true-time delay unit. Thanks to the members of the WCSNG lab at UC 
% San Diego for insightful discussion and proofreading the manuscript. The research was supported by NSF awards \# 2211805 and \# 2232481.
We thank YungYi Sun, Sonny Cao, and Nagarjun Bhat for assistance with prototyping and experiments, and Prof. Gabriel Rebeiz and Qian Ma of Extreme Waves for support with the true-time delay unit. We also thank the WCSNG lab at UC San Diego for discussions and proofreading. This work was supported by NSF awards \#2211805 and \#2232481.

\bibliographystyle{unsrt}
\balance
\bibliography{acmart}
\label{lastpage}
% \appendix
% \input{9_appendix.tex}
% \input{sigcomm21_artifact_appendix_template}
\end{document}